\newcommand{\dg}{d_g}
\newcommand{\Lg}{{L}_g}
\newcommand{\vol}{\operatorname{vol}_g}
\newcommand{\tvol}{\tau_V}
\newcommand{\Ric}{\operatorname{Ric}}
\newcommand{\bN}{\mathbb{N}}
\newcommand{\bR}{\mathbb{R}}
\theoremstyle{plain}
\newtheorem{theorem}{Theorem}
\newtheorem{proposition}{Proposition}
\newtheorem{lemma}{Lemma}
\theoremstyle{definition}
\newtheorem{example}{Example}
\newtheorem{remark}{Remark}
\newtheorem{definition}{Definition}
\title{Mean Curvature, Singularities and Time Functions in Cosmology}
\author[1]{Gregory J. Galloway}
\author[2]{Leonardo Garc\'ia-Heveling}
\affil[1]{\small Department of Mathematics, University of Miami,
Coral Gables, FL 33124, USA, Email:~\texttt{galloway@math.miami.edu}}
\affil[2]{Fachbereich Mathematik, Universit\"at Hamburg,
Bundesstra{\ss}e~55,~20146 Hamburg, Germany,
Email:~\texttt{leonardo.garcia@uni-hamburg.de}}
\date{}
\begin{document}

\maketitle

\begin{abstract}
 In this contribution, we study spacetimes of cosmological interest, without making any symmetry assumptions. We prove a rigid Hawking singularity theorem for positive cosmological constant, which sharpens known results. In particular, it implies that any spacetime with $\Ric \geq ng$ in timelike directions and containing a compact Cauchy hypersurface with mean curvature $H \geq n$ is timelike incomplete. We also study the properties of cosmological time and volume functions, addressing questions such as: When do they satisfy the regularity condition? When are the level sets Cauchy hypersurfaces? What can one say about the mean curvature of the level sets? This naturally leads to consideration of Hawking type singularity theorems for Cauchy surfaces  satisfying mean curvature inequalities in a certain weak sense.
 \medskip

 \noindent \textbf{Keywords:} Hawking singularity theorem, rigidity, time functions, cosmological time, mean curvature in the support sense.
\end{abstract}

\section{Introduction}

A fundamental assumption in cosmology, the study of the universe at large scales, is that our position as observers in the Universe is not special. This translates into the hypothesis that the universe is approximately homogeneous and isotropic. According to general relativity, it can then be modelled by a FLRW spacetime, that is, a Lorentzian manifold of the form $(b,\infty) \times N$ with metric tensor
\begin{equation*}
 g := - dt^2 + a^2(t) h,
\end{equation*}
where $(N,h)$ is a Riemannian manifold of constant curvature. The warping function $a(t)$, also called scale factor, can be determined via the Einstein equations, when some assumption on the matter content of the universe is made. Its derivative $\dot a(t)$ corresponds to the expansion rate of the universe (which is known to be positive at present). An important feature of these models is the following: If the matter content obeys certain energy conditions (e.g.\ the strong energy condition, which implies $\ddot a(t) \le 0$ for all $t$), and the expansion is large enough at some time (e.g.\ $\dot a(t_0) \geq \epsilon > 0$), then $b > -\infty$. This is interpreted as the universe originating in a big bang, since the $t$-coordinate naturally plays the role of time, and we are concluding that the history of the universe only extends backwards until a finite $t$-value. In particular, if (by a translation, if necessary) we set $b=0$, then $t$ measures the maximum proper time experienced by an observer from the big bang until his or her present.

Of course, we know that the universe is not exactly homogeneous or isotropic. Thus the question arises if the features outlined above, which were  supported by observational evidence, are mathematically robust under perturbations. A large portion of this question was answered in the affirmative by Hawking's singularity theorem \cite{Haw66}, which states that an arbitrary spacetime satisfying the strong energy condition ($\Ric \geq 0$ in timelike directions) and containing a Cauchy hypersurface $S$ with mean curvature (aka expansion) $H \geq \epsilon > 0$, is timelike geodesically incomplete towards the past. Moreover, in this globally hyperbolic setting, the length of all timelike geodesics going from $S$ towards the past is uniformly bounded. Hawking \cite{Haw68} and Geroch \cite{Ger70} also showed that spacetimes satisfying appropriate causality conditions can be equipped with time functions, although in a highly non-unique way.

Some more nuanced aspects of these questions, however, were only resolved later. For one, modern observations indicate that the universe is expanding at an accelerated rate. This suggests a positive value for the cosmological constant, which is in violation of the strong energy condition, and hence would render Hawking's theorem inapplicable. This issue was addressed by Borde \cite[Theorem~4]{Borde94} (and later independently in \cite[Prop.~3.3]{AnGa02}), who proved that the strong energy condition can then be replaced by the lower Ricci bound $\Ric \geq n g$ in timelike directions, as long as the bound on the mean curvature of the Cauchy hypersurface is strong enough to compensate. This less stringent Ricci bound is compatible with a positive cosmological constant. (For a quite different variation on Hawking's singularity theorem, which allows for a  positive cosmological constant by only requiring the null energy condition, see \cite{GaLi18}.)

The issue of non-uniqueness of time functions was addressed by Andersson, Howard and the first author \cite{AGH98}, in their study of  what is now commonly referred to as the {\it cosmological time function}, which on appropriate ``big bang'' spacetimes represents the maximum proper time to the initial singularity, just as the $t$ coordinate does in the special case of FLRW spacetimes. (Much earlier, Wald and Yip \cite{WaYi81} introduced the cosmological time function, or rather its time dual, which they referred to as the `maximum lifetime function’.) A similar, alternative approach to defining a canonical time function, the {\it cosmological volume function}, using the volume instead of length element, was recently introduced by the second author \cite{GH23}.

In this paper, we prove the following two theorems, which in the body of the paper are restated as a series of several smaller results, along with some additional related results; see Sections \ref{sec:Hawkingthm} and \ref{sec:timefcts} for details.

%\newpage
\begin{theorem} \label{thm:intro1}
Let $(M,g)$ be an $(n+1)$-dimensional spacetime, and let $S$ be a smooth spacelike compact Cauchy hypersurface with mean curvature $H \geq n$.
Suppose that the lower curvature bound $\Ric \geq n g$ holds in timelike directions. Then  $(M,g)$ is past timelike geodesically incomplete.  More specifically, one has the following.
 \begin{enumerate}
\item If $H > n$ at some point of $S$, then all timelike geodesics are past-incomplete.\smallskip

\item If $H = n$, then either some geodesic normal to $S$ is past-incomplete, or else $J^-(S)$ is isometric to $((-\infty,0] \times S, -dt^2 \oplus e^{2t} h)$, in which case all nonnormal timelike geodesics are past-incomplete.\smallskip
 \end{enumerate}

\vspace{-.05in}
\noindent
Furthermore:

 \vspace{-.05in}
 \begin{enumerate}
 \item[3.] The cosmological volume function $\tvol$ of $(M,g)$ is a continuous, finite valued time function, with $\tvol \to 0$ along every past-inextendible causal curve.
 \end{enumerate}
 \end{theorem}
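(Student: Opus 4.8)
The plan is to establish the two substantive assertions by hand --- that $\tvol$ is everywhere finite, and that $\tvol\to 0$ along every past-inextendible causal curve --- and then to read off continuity and the time-function property from \cite{GH23}, where a \emph{regular} cosmological volume function (one enjoying exactly these two properties) is shown to be a continuous time function on a globally hyperbolic spacetime. Note that $(M,g)$ is globally hyperbolic since $S$ is a Cauchy hypersurface, and recall that $\tvol(p)=\vol(I^-(p))$.

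For finiteness, fix $p\in M$ and, using that $S$ is Cauchy, split $I^-(p)=(I^-(p)\cap I^+(S))\cup(I^-(p)\cap S)\cup(I^-(p)\cap I^-(S))$. The middle piece lies in the hypersurface $S$ and is $\vol$-null; the first piece is empty unless $p\in I^+(S)$, in which case it is contained in the causal diamond $J^+(S)\cap J^-(p)$, which is compact by global hyperbolicity (here $S$ itself being compact is used) and hence of finite volume. For the last piece it suffices to bound $\vol(J^-(S))$. Here I would use that $J^-(S)$ is exhausted, up to the past cut locus, by maximizing past-directed unit-speed geodesics meeting $S$ orthogonally (valid since $S$ is compact and spacelike), and follow the Jacobian $\mathcal J(x,s)$ of the normal exponential map, which satisfies $\partial_s\log\mathcal J=\theta$ with $\mathcal J(x,0)=1$ and initial expansion $\theta(x,0)=-H(x)\le -n$. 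The Raychaudhuri equation together with $\Ric\ge ng$ in timelike directions gives $\partial_s\theta\le n-\theta^2/n$ up to the first focal point; since the constant $-n$ solves this comparison equation, ODE comparison yields $\theta\le -n$ throughout, whence $\mathcal J(x,s)\le e^{-ns}$ and so the integral of $\mathcal J(x,\cdot)$ over the relevant parameter range is at most $\int_0^\infty e^{-ns}\,ds=1/n$. Integrating over $S$ gives $\vol(J^-(S))\le\vol(S)/n<\infty$, so $\tvol(p)<\infty$. I would emphasise that this bound persists even in the rigid case $H\equiv n$ of item~(2), where the normal geodesics can be past-complete: the exponential decay of the volume element still forces $\vol(J^-(S))$ to be finite. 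This is precisely the advantage of $\tvol$ over the cosmological \emph{time} function, the latter being infinite-valued in that rigid case.

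For the limit, let $\gamma\colon[0,a)\to M$ be past-inextendible. Since $S$ is Cauchy, $\gamma(t)\in I^-(S)$ for $t$ large, so $\vol(I^-(\gamma(t)))\le\vol(J^-(S))<\infty$ eventually, and $t\mapsto\vol(I^-(\gamma(t)))$ is non-increasing because the sets $I^-(\gamma(t))$ are nested and decreasing. By downward continuity of $\vol$ on sets of finite measure, the limit equals $\vol(\bigcap_t I^-(\gamma(t)))$, so it remains to check that this intersection is empty. If some $q$ lay in $I^-(\gamma(t))$ for all $t$, then $\gamma([t_0,a))\subseteq I^+(q)\cap J^-(\gamma(t_0))\subseteq J^+(q)\cap J^-(\gamma(t_0))$, which is compact; but a past-inextendible causal curve cannot be imprisoned in a compact subset of a strongly causal --- in particular globally hyperbolic --- spacetime, a contradiction. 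Hence $\tvol(\gamma(t))\to 0$.

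Combining these two facts with the cited results of \cite{GH23} then yields that $\tvol$ is a continuous, finite-valued time function with $\tvol\to 0$ along every past-inextendible causal curve. The single step that calls for genuine work is the volume bound for $J^-(S)$ via the Raychaudhuri comparison --- which is where the mean-curvature and Ricci hypotheses enter, and which must be arranged so as to remain valid in the rigid, possibly past-complete, case --- whereas the rest is a matter of assembling standard consequences of global hyperbolicity (compact causal diamonds, absence of imprisonment) with the $\sigma$-additivity of $\vol$.
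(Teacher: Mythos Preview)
Your proposal addresses only item~3; items~1 and~2 (the incompleteness and rigidity statements) are not touched, whereas the paper establishes them separately via a perturbation/mean-curvature-flow argument together with a warped-product analysis (Theorem~\ref{thm:newHawking}).

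For item~3 itself, your argument is correct and follows essentially the same route as the paper. The paper obtains finiteness of $\tvol$ by quoting the volume-comparison result \cite[Thm.~5.2]{GH23}; you instead unpack that result directly, bounding $\vol(J^-(S))$ via the Raychaudhuri inequality $\partial_s\theta\le n-\theta^2/n$ and the resulting Jacobian estimate $\mathcal J\le e^{-ns}$, which is exactly the Treude--Grant mechanism underlying \cite[Thm.~5.2]{GH23}. Your observation that the exponential decay persists in the rigid case $H\equiv n$, so that $\tvol$ stays finite even when the cosmological \emph{time} function does not, is a nice explicit point. For the limit $\tvol\to 0$, your argument---downward continuity of measure on nested sets of finite volume, then ruling out $\bigcap_t I^-(\gamma(t))\ne\emptyset$ via imprisonment in a compact causal diamond---is essentially identical to the paper's (the paper phrases it as finding a point $p$ in the intersection and noting $\gamma\subset J(p,\gamma(0))$). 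Both then defer continuity and the time-function property to \cite{GH23}. So, modulo your choice to spell out the finiteness step rather than cite it, the two proofs of item~3 coincide.
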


 \begin{remark}
 By our conventions, the mean curvature $H$ is the trace of the second fundamental form $K$, defined as $K(X,Y) = g(\nabla_X u, Y)$, for $X,Y \in T_pS$, where $u$ is the future directed unit normal to $S$. For simplicity, we assume the spacetime metric $g$ to be smooth.
 \end{remark}

The Ricci curvature condition of the above theorem will hold in a spacetime that satisfies the Einstein equation with cosmological constant
 $\Lambda = \frac{n(n-1)}{2}$ provided the energy-momentum tensor
 satisfies a natural nonnegativity condition.
Under a similarly natural nonnegativity condition, the mean curvature of a spacelike hypersurface $S$ will satisfy $H \ge n$ (or $H \le -n$) provided $S$ has nonpositive scalar curvature.

Theorem \ref{thm:intro1} is a rigid version of Hawking's singularity theorem for positive cosmological constant, rigid meaning that we allow for the case $H = n$. It builds upon previous rigidity results. In particular, point~1 and the splitting part of point~2 were proven in \cite{AnGa02}. The splitting also follows from a more general result of Graf \cite{Gra20}, applicable to certain non-compact hypersurfaces $S$.  See Section 2 for further discussion. Point 3 tells us that, moreover, the assumptions of Hawking's theorem are sufficient for having a canonical notion of time through the cosmological volume function. For the cosmological time function $\tau$, it is necessary to assume separately that it is regular (i.e.\ $\tau < \infty$ and $\tau \to 0$ along every past-inextendible causal curve). On the plus side, this regularity assumption then gives $\tau$ the following  nice properties (beyond those considered in \cite{AGH98}).

\begin{theorem} \label{thm:intro2}
 Let $(M,g)$ be an $(n+1)$-dimensional spacetime equipped with a regular cosmological time function $\tau$, with level sets $S_T := \{\tau = T \}$. Then
 \begin{enumerate}
  \item If $\Ric \geq -n \kappa g$ in timelike directions, then $S_T$ has mean curvature $H_T \leq \beta_{\kappa,T}$ in the support sense. (See \eqref{beta} for the definition of $\beta_{\kappa,T}$.)\smallskip

  \item If $(M,g)$ is future timelike geodesically complete, and $(M,g)$ has compact Cauchy hypersurfaces or the future causal boundary of $(M,g)$ is spacelike, then $S_T$ is a Cauchy hypersurface, for every $T > 0$.
 \end{enumerate}
\end{theorem}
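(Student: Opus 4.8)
\noindent\textit{Proof strategy.}

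\noindent\emph{Part 1.} I would argue pointwise. Fix $p\in S_T$; by the structure theory of regular cosmological time functions, $p$ lies on a maximal past-directed unit-speed timelike ray $\gamma\colon(0,T]\to M$ with $\gamma(T)=p$ and $\tau(\gamma(s))=s$ for every $s$. The plan is to produce, for each $\epsilon>0$, a smooth future support hypersurface for $S_T$ at $p$ whose mean curvature there is $\le\beta_{\kappa,T}+\epsilon$ — which is precisely the assertion that $H_T\le\beta_{\kappa,T}$ in the support sense. For small $\delta>0$ put $q_\delta:=\gamma(\delta)$, so $\tau(q_\delta)=\delta$ and $\gamma|_{[\delta,T]}$ is maximizing and hence has no point conjugate to $q_\delta$ in $(\delta,T)$; a standard monotonicity of conjugate points along a fixed geodesic furthermore forces $p$ to be non-conjugate to $q_\delta$ for all but at most one value of $\delta$. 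For such a $\delta$, let $\Sigma_\delta$ be the Lorentzian distance sphere of radius $T-\delta$ about $q_\delta$ near $p$ — equivalently the $\exp_{q_\delta}$-image of $\{v:g(v,v)=-(T-\delta)^2\}$ near the direction of $\gamma$ — which by the Gauss lemma is a smooth spacelike hypersurface through $p$. Since every $r\in\Sigma_\delta$ satisfies $\tau(r)\ge\tau(q_\delta)+\dg(q_\delta,r)=T$, we have $\Sigma_\delta\subseteq\{\tau\ge T\}$, so $\Sigma_\delta$ touches $S_T$ at $p$ from the future; with the convention $K(X,Y)=g(\nabla_Xu,Y)$ this is the one-sided contact relevant for an upper bound on $H_T$. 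The mean curvature of $\Sigma_\delta$ at $p$ is the expansion $\theta(T-\delta)$ at $p$ of the congruence of future timelike geodesics issuing from $q_\delta$; from Raychaudhuri's equation and $\Ric\ge-n\kappa g$ in timelike directions one gets $\theta'\le-\tfrac1n\theta^2-n\kappa$, and a Riccati comparison against the model expansion $\bar\theta$ — the mean curvature $\beta_{\kappa,\cdot}$ of the slices of the extremal comparison FLRW spacetime, sharing the vertex asymptotics $\theta,\bar\theta\sim n/s$ — yields $\theta(T-\delta)\le\beta_{\kappa,T-\delta}$. As $\beta_{\kappa,T-\delta}\to\beta_{\kappa,T}$ when $\delta\to0^+$, this completes Part~1. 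I expect the only delicate points to be the conjugate-point bookkeeping that keeps $\Sigma_\delta$ smooth, and the Riccati comparison with its singular initial datum at the vertex — both routine in this circle of ideas.

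\noindent\emph{Part 2.} Since $\tau$ is a continuous time function, $S_T=\partial\{\tau<T\}$ is automatically a closed, acausal, edgeless topological hypersurface; moreover, along any inextendible causal curve $\gamma$, the function $\tau\circ\gamma$ (taken increasing) is a homeomorphism onto $(0,\sup_\gamma\tau)$, the left endpoint being $0$ by regularity of $\tau$. Hence $S_T$ is a Cauchy hypersurface for every $T>0$ if and only if
\begin{equation*}
 \sup_\gamma\tau=+\infty\quad\text{along every future-inextendible causal curve }\gamma.\tag{$\star$}
\end{equation*}
(The forward implication is trivial; conversely, $(\star)$ makes $S_T$ nonempty and makes every inextendible causal curve meet it, necessarily exactly once.) So everything reduces to $(\star)$, and here is how I would prove it. Suppose, for contradiction, that some future-inextendible causal curve $\gamma$ has $\tau\circ\gamma\le C<\infty$. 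Fix $q\in I^-[\gamma]$; then for $s$ large, $q\ll\gamma(s)$ and $\dg(q,\gamma(s))\le\tau(\gamma(s))-\tau(q)\le C-\tau(q)=:L$, and we may pick a maximizing timelike geodesic $\sigma_s$ from $q$ to $\gamma(s)$, of length at most $L$. Since $M$ is globally hyperbolic, hence strongly causal, $\gamma$ is neither imprisoned in nor accumulating inside any compact set, so $\gamma(s)$ leaves every compact set; therefore, if the initial velocities of the $\sigma_s$ stayed in a compact subset of the future timelike unit hyperboloid of $T_qM$, a subsequence of the $\sigma_s$ would converge to a future-inextendible timelike geodesic from $q$ of length at most $L$ — contradicting future timelike geodesic completeness. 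Thus the initial velocities must escape to the null cone, and ruling that out is the crux of the argument and the one place where the two alternative geometric hypotheses enter. If the future causal boundary is spacelike, an escaping subsequence realizes the future ideal point determined by $\gamma$ as a null ideal point (its past generated by a null geodesic), contradicting achronality of the boundary. If instead the Cauchy hypersurfaces are compact, one transports the $\sigma_s$ back to a fixed compact Cauchy hypersurface $S_0$ — using $\gamma(s)\in I^+(S_0)$ and $\tau|_{S_0}\ge m_0>0$ to bound the relevant distances — where an escape to null now produces a future-inextendible null geodesic along which $\tau$ stays bounded, and a further limit-curve argument, exploiting compactness of the late Cauchy slices together with future timelike completeness, excludes this. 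In all cases $(\star)$ holds, so $S_T$ is a Cauchy hypersurface for every $T>0$.
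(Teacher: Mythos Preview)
\medskip\noindent\textbf{Part 1.} Your argument is essentially the paper's: take the $\tau$-realizing geodesic $\gamma$ to $p$, use the future Lorentzian spheres about $\gamma(\delta)$ as smooth future support hypersurfaces for $S_T$, bound their mean curvature at $p$ via Raychaudhuri/Riccati comparison, and let $\delta\to 0$. The paper phrases the smoothness of the spheres via the absence of cut points along $\gamma|_{[\delta,T]}$ (since it maximizes to $S_T$), which is slightly cleaner than your conjugate-point bookkeeping, but the content is the same.

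\medskip\noindent\textbf{Part 2.} Here you take a genuinely different route. The paper fixes a Cauchy hypersurface $S$ in the past of $\sigma$, takes $S$-maximizing geodesics $\gamma_n$ to $\sigma(n)$, and obtains compactness of their \emph{footpoints on $S$} in both cases: trivially if $S$ is compact, and via the Wald--Yip lemma (that $\overline{I^-(\sigma)}\cap S$ is compact when the future causal boundary is spacelike) otherwise. Since $S$-maximizers are normal to $S$, the limit curve is automatically a \emph{timelike} $S$-ray of length $\le T$, and future timelike completeness finishes it. No null-escape dichotomy ever arises.

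Your approach instead fixes a single point $q\in I^-(\gamma)$ and maximizes from $q$, which forces you to analyse the possibility that the initial velocities escape to the null cone. In the spacelike-boundary case your idea does work once unpacked: the limit is a future-inextendible maximizing (hence achronal) null geodesic $\eta$ from $q$, one checks $I^-(\eta)\subseteq I^-(\gamma)$ while $q\in I^-(\gamma)\setminus I^-(\eta)$, giving a proper TIP inclusion --- but note that this is what ``spacelike future causal boundary'' forbids, not ``achronality of the boundary'' as you wrote. In the compact-Cauchy case, however, your sketch is circuitous: the ``transport to $S_0$'' and the ``further limit-curve argument'' you invoke to exclude the null limit are, once written out, precisely the paper's hypersurface-based argument applied to $\eta$ in place of $\gamma$; the detour through $q$ buys nothing. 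It would be both shorter and cleaner to maximize from a compact Cauchy hypersurface from the start, as the paper does, so that the null-escape case never appears.
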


Andersson et al.~\cite{ABBZ12} had previously used the cosmological time function as a tool to construct a CMC time function on spacetimes with constant sectional curvature. That is, a time function whose level sets have constant mean curvature (as is the case for the $t$-level sets in an FLRW spacetime). The problem of finding CMC time functions has also been studied by Gerhardt~\cite{Ger83,Ger21} under certain asymptotic assumptions.

The paper is structured as follows. In Section~\ref{sec:Hawkingthm} we prove point 1 and (a generalization of) point 2 of Theorem~\ref{thm:intro1}. In Section~\ref{sec:timefcts} we turn to time functions, proving point 3 of Theorem~\ref{thm:intro1} and Theorem~\ref{thm:intro2}, and providing additional context and examples. In particular, in Section~\ref{sec:Hsupport} we prove a version of Hawking's singularity theorem for $C^0$ Cauchy hypersurfaces obeying a mean curvature bound in the support sense.

%%%%%%%%%%%%%%%%%%%%%%%%%%%%%%%%%%%%%%%%%%%%%%%%%%%%%%%%%%%%%%%%%%%%%%%%%%%%%%%%%%%%%%%%%%%%%%%%%%%

%%%%%%%%%%%%%%%%%%%%%%%%%%%%%%%%%%%%%%%%%%%%%%%%%%%%%%%%%%%%%%%%%%%%%%%%%%%%%%%%%%%%%%%%%%%%%%%%%%%

\section{A rigid Hawking singularity theorem for positive cosmological constant} \label{sec:Hawkingthm}

In this section, we first prove points 1 and 2 of Theorem~\ref{thm:intro1}, and then we extend point 2 to non-compact Cauchy hypersurfaces $S$ that admit an $S$-ray.

\begin{theorem}\label{thm:newHawking} Let $(M,g)$ be an $(n+1)$-dimensional spacetime, and let $S$ be a compact Cauchy hypersurface with mean curvature $H \geq n$. Suppose that the lower curvature bound
$\Ric \geq n g$ holds in timelike directions. Then  $(M,g)$ is past timelike geodesically incomplete.  More specifically, one has the following.

 \begin{enumerate}
\item If $H > n$ at some point of $S$, then all timelike geodesics are past-incomplete.\smallskip

\item If $H = n$, then either some geodesic normal to $S$ is past-incomplete, or else $J^-(S)$ is isometric to $((-\infty,0] \times S, -dt^2 \oplus e^{2t} h)$, in which case all nonnormal timelike geodesics are past-incomplete. \smallskip
 \end{enumerate}
 \end{theorem}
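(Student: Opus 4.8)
The plan is to run the standard Hawking-type argument with the comparison vector field adapted to the lower Ricci bound $\Ric \geq ng$, and then track the rigidity. Define on $M$ the Lorentzian distance to $S$, $\rho(p) = d_g(S,p)$ for $p \in J^-(S)$ (measuring proper time to the past), and recall that on a globally hyperbolic spacetime with compact Cauchy hypersurface the supremum defining $\rho$ is realized by a maximizing past-directed normal geodesic. Along such a geodesic $\gamma$ the mean curvature $\theta(s) = H_{\{\rho = s\}}$ of the level sets satisfies the Raychaudhuri inequality $\theta' \leq -\Ric(\gamma',\gamma') - \frac{\theta^2}{n} \leq n - \frac{\theta^2}{n}$ (using $\Ric \geq ng$ in the timelike direction $\gamma'$, sign conventions as in the remark), with initial value $\theta(0) = H|_{\gamma(0)} \geq n$. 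Comparison with the Riccati ODE $f' = n - f^2/n$, whose relevant solution through $f(0) = n$ is the constant $f \equiv n$ and whose solutions with $f(0) > n$ blow up to $+\infty$ in finite parameter time, forces a focal point in the past within proper time $\leq n/\big(H|_{\gamma(0)} - n\big)$ when $H > n$ somewhere, hence past incompleteness of all timelike geodesics in case~1 (one extends the conclusion from normal geodesics to arbitrary timelike geodesics exactly as in Hawking's original argument, using that every past-inextendible timelike curve from the Cauchy hypersurface has length bounded by the focal distance).

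For case~2, $H \equiv n$, the comparison is rigid: if no normal geodesic is past-incomplete, then each maximal past normal geodesic $\gamma$ is defined on all of $(-\infty,0]$ and free of focal points, which forces equality $\theta(s) \equiv n$ in Raychaudhuri along $\gamma$, hence $\Ric(\gamma',\gamma') = n$ and the full shear (the trace-free part of the second fundamental form of the level sets) vanishes along $\gamma$. The vanishing shear plus $\theta \equiv n$ identifies the second fundamental form of every level set $\{\rho = s\}$ as $K = g|_{\{\rho=s\}}$; integrating the normal geodesic flow gives that $J^-(S)$ is foliated by the smooth level sets of $\rho$ and that in Gaussian normal coordinates $(t,x)$ based at $S$ (with $t \in (-\infty,0]$) the metric takes the warped form $-dt^2 + \varphi(t)^2 h$ for a fixed Riemannian metric $h$ on $S$; the relation $K = g$ forces $\dot\varphi/\varphi \equiv 1$, i.e. $\varphi(t) = e^{t}$, giving the claimed isometry $J^-(S) \cong ((-\infty,0]\times S, -dt^2 \oplus e^{2t} h)$. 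That all nonnormal timelike geodesics are then past-incomplete is immediate from the explicit metric: a past-inextendible timelike curve in $((-\infty,0]\times S, -dt^2 + e^{2t}h)$ that is not a $t$-line has finite length because $\int_{-\infty}^{0} e^{t}\,dt < \infty$ bounds its spatial contribution while the $dt$ contribution is also bounded.

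The main obstacle is the rigidity step, specifically upgrading pointwise-along-each-geodesic information into a global splitting with the \emph{smooth} warped structure: one must argue that the level sets of $\rho$ are smooth spacelike hypersurfaces (a priori $\rho$ is only semiconcave), that the normal exponential map is a diffeomorphism onto $J^-(S)$, and that the resulting metric coefficients are smooth — here one invokes the absence of focal points and cut points along the past normal geodesics (guaranteed by the maximality/no-incompleteness hypothesis together with compactness of $S$), standard structure theory for Lorentzian distance functions, and elliptic/ODE regularity for the warping function. Since the problem statement notes this part was already established in \cite{AnGa02}, I would cite that rigidity argument for the splitting and present here mainly the Raychaudhuri comparison and the extraction of the focal point in case~1, then the reduction of case~2 to the cited splitting and the short computation that $\int_{-\infty}^0 e^t\,dt<\infty$ yields incompleteness of the nonnormal geodesics.
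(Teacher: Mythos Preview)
Your proposal has two genuine gaps.

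\textbf{Case 1.} You obtain a focal time bound $n/(H|_{\gamma(0)}-n)$ only along normal geodesics issuing from points where $H>n$. But the hypothesis is that $H>n$ at \emph{some} point; at all other points one may have $H=n$ exactly, and there the Riccati comparison $\theta'\le n-\theta^2/n$ with $\theta(0)=n$ yields the constant solution $\theta\equiv n$, giving no focal point at all. Hence there is no uniform bound on the Lorentzian distance to $S$, and the extension ``exactly as in Hawking's original argument'' does not go through. The paper deals with this by first deforming $S$ (via a modified mean curvature flow, citing \cite{AnGa02}) to a Cauchy hypersurface $\tilde S$ with $\tilde H>n$ \emph{everywhere}, after which the uniform Hawking argument applies.

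\textbf{Case 2, nonnormal geodesics.} This is the main new content of the theorem, and your argument is not a proof. You write that a past-inextendible timelike curve that is not a $t$-line ``has finite length because $\int_{-\infty}^{0} e^{t}\,dt<\infty$ bounds its spatial contribution while the $dt$ contribution is also bounded.'' But Lorentzian length does not decompose into a spatial plus a $dt$ contribution, and for a generic (non-geodesic) past-inextendible timelike curve the length can be arbitrarily large (take a curve that is a $t$-line except for an arbitrarily small wiggle). What is needed is an argument specific to \emph{geodesics}. The paper's proof reduces to a $2$-dimensional totally geodesic slice $-dt^2+e^{2t}dx^2$, uses the Killing field $\partial_x$ to obtain the conserved quantity $C=e^{2t}\dot x\neq 0$, and from the unit-speed condition derives $\dot t=-\sqrt{C^2e^{-2t}+1}<-Ce^{-t}$; comparison with $\dot y=-Ce^{-y}$ (whose solution is $y(s)=\ln(-Cs+\tilde C)$) then shows $t\to-\infty$ in finite affine parameter. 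Your integral $\int_{-\infty}^0 e^t\,dt<\infty$ is indeed what ultimately makes the affine parameter $\int_{-\infty}^{0} dt/\sqrt{1+C^2e^{-2t}}$ finite, but without the conserved quantity $C$ you have no mechanism tying the affine parameter to that integral.

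Your treatment of the splitting itself (citing \cite{AnGa02}) matches the paper.
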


\begin{remark} \label{rem:rescale}
 Theorem \ref{thm:newHawking} also holds if $\Ric \geq - n \kappa g$ and $H \geq n \sqrt{ \vert \kappa \vert}$ for some constant $\kappa<0$. But then, the metric $\tilde g := -\kappa g$ satisfies
 \begin{align*}
  \widetilde{\Ric} = \Ric \geq n \tilde{g}, && \tilde{H} = \vert \kappa \vert^{-1/2} H \geq n.
 \end{align*}
 Since geodesic completeness is homothety invariant, no generality is really gained.
\end{remark}

After a perturbation, point~1 of Theorem~\ref{thm:newHawking} follows from \cite[Prop.~3.3]{AnGa02}, while \cite[Prop.~3.4]{AnGa02} further shows that in the case $H \geq n$, the only exception to incompleteness of \textit{all} geodesics occurs when $J^-(S)$ is isometric to a warped product with exponential warping function (cf.~point~2). We complete the picture with the observation that these exponentially warped products are still always incomplete (although they do contain {\it some} past-complete geodesics as well). This observation was already made, in a slightly different context and without explicit proof, in \cite[Thm.~6.1]{GaVe15}. Note that a concrete example of exponentially warped product, which provides good intuition for the general case, is a coordinate patch in de Sitter spacetime \cite[Fig.~16(ii)]{HaEl73}.

\begin{proof}
Firstly, if $H > n$ at some point of $S$, then via a modified mean curvature flow, one can obtain a Cauchy hypersurface $\tilde{S}$ with mean curvature $\tilde{H} > n$ everywhere (see proof of \cite[Prop.\ 3.4]{AnGa02} for details). Then, by Hawking's singularity theorem (the version for negative Ricci, see \cite[Prop.\ 3.3]{AnGa02}), all timelike geodesics in $(M,g)$ are past incomplete.

To prove the rest of the theorem, note that by \cite[Prop.\ 3.4]{AnGa02} (or Theorem~\ref{thm:newHawking2} below), either there is at least one past-incomplete timelike geodesic normal to $S$, or $J^-(S)$ is isometric to $(-\infty,0] \times S$ with metric tensor $g = -dt^2 + e^{2t} h$, where $h$ is the induced metric on $S$. It thus suffices to show that $(-\infty,0) \times S$ with metric tensor $g = -dt^2 + e^{2t} h$ is such that all timelike geodesics not orthogonal $S$ are past incomplete.

Choose $V \in TM$ any past-directed timelike vector with basepoint in $S$, not normal to $S$. Then the projection of $V$ onto $TS$ is non-vanishing. Let $\gamma$ be the unique geodesic with $\dot\gamma(0) = V$. The projection of $\gamma$ onto $S$ is a
$1$-dimensional immersed submanifold of $S$ (possibly with self-intersections), which we can parametrize by a coordinate $x$ such that the metric $h$ restricts to $dx^2$, and such that $x(0)=0$ and $\dot x > 0$ for our geodesic $\gamma$. To get $\dot x > 0$, and hence an immersion, we have used the fact that if $\dot x(s) = 0$ for some parameter value $s$, then by uniqueness of solutions to the geodesic equation, $\gamma$ would be a $t$-line, in contradiction to our assumption on $\dot \gamma (0)$. Thus $\gamma$ is contained in a $2$-dimensional immersed submanifold $N \cong (-\infty,0) \times I$, on which the Lorentzian metric $g$ restricts to $-dt^2+e^{2t}dx$.  Here $I \subset \bR$ is an interval determined by the projection  of $\gamma$.

Because $\gamma$ is a geodesic of the ambient spacetime $M$, it must also be a geodesic of the submanifold $N$ equipped with the restricted metric. Since $\partial_x$ is a Killing vector field for $N$, we have a conserved quantity
\begin{equation} \label{eq:C}
 C = g(\partial_x,\dot\gamma) = e^{2t} \dot x.
\end{equation}
Note that $C > 0$ because of how we have chosen $\dot\gamma(0)$. Moreover, we may parametrize $\gamma$ by $g$-arclength, so that
\begin{equation} \label{eq:arc}
 -1 = g(\dot\gamma,\dot\gamma) = -\dot t^2 + e^{2t} \dot x^2.
\end{equation}
Isolating $\dot t$ from \eqref{eq:arc} and substituting \eqref{eq:C} into it, we obtain
\begin{equation} \label{eq:tdot}
 \dot t = -\sqrt{C^2 e^{-2t} + 1} < - C e^{-t},
\end{equation}
where we have chosen the overall minus sign so that $\gamma$ is past-directed. But already the solution $y(s)$ to $\dot y = - C e^{-y}$ blows up in finite time, so the solution $t(s)$ of \eqref{eq:tdot} must blow-up at least as fast. Indeed,
\begin{equation}
 y(s) = \ln \big( - C s + \tilde{C} \big).
\end{equation}
We conclude that since $t(s)$ blows up in finite affine parameter time, the geodesic $\gamma$ is incomplete.
\end{proof}

As we now discuss, point 2 of Theorem \ref{thm:newHawking} remains valid under the weaker assumption that $S$ admits a past $S$-ray (rather than $S$ being compact). By a past $S$-ray we mean a past inextendible causal geodesic $\gamma$ starting on $S$ such that each segment of
$\gamma$ maximizes the Lorentzian distance to $S$.   A future $S$-ray is defined similarly.  If, as in the theorem, $S$ is smooth and spacelike, any $S$-ray meets $S$ orthogonally, and hence is timelike.  By a standard construction, if $S$ is compact it admits a past (as well as future) $S$-ray; see e.g.\ Section \ref{sec:cauchyness}.  The conclusion of point 2 now follows from the next result, which, for convenience, is stated with the time orientation reversed.

\begin{theorem}\label{thm:newHawking2}
Let $(M,g)$ be an $(n+1)$-dimensional globally hyperbolic spacetime, and let $S$ be a smooth spacelike Cauchy hypersurface with mean curvature $H \le -n$.  Assume also that $S$ admits a future $S$-ray. Suppose that the lower curvature bound $\Ric \geq n g$ holds in timelike directions. If the normal geodesics to $S$ are future complete, $J^+(S)$ is isometric to $([0, \infty) \times S, -dt^2 \oplus e^{-2t} h)$.
\end{theorem}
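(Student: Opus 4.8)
The plan is to run the standard Lorentzian splitting argument relative to $S$, using the future $S$-ray to produce a line in the Busemann-type setting and then upgrading the resulting splitting to the exponentially warped form dictated by the rigidity case $\Ric \ge ng$, $H = -n$. First I would introduce the Lorentzian distance $\rho(p) = d(S,p)$ to the Cauchy surface, which (since $S$ is a smooth spacelike Cauchy hypersurface) is finite, continuous on $J^+(S)$, and smooth near any point on the interior of a maximizing segment; along the normal exponential map its gradient is the past-directed unit normal field, and the level sets $\Sigma_t = \{\rho = t\}$ have mean curvature controlled by the Riccati (Raychaudhuri) equation. Completeness of the normal geodesics guarantees that $\rho$ is defined on all of $J^+(S)$ and that $\Sigma_t$ is nonempty for every $t \ge 0$; the future $S$-ray $\gamma$ shows the normal exponential flow does not develop focal points along $\gamma$, so $\rho$ is smooth in a neighborhood of $\gamma$.

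Next I would do the comparison/rigidity analysis along $\gamma$. The mean curvature $\theta(t)$ of the level sets along $\gamma$ satisfies $\theta' \le -\Ric(\gamma',\gamma') - \tfrac1n \theta^2 \le -n - \tfrac1n\theta^2$ by Raychaudhuri together with $\Ric \ge ng$ and Cauchy--Schwarz on the shear term, with initial condition $\theta(0) = H(\gamma(0)) \le -n$. The ODE $\phi' = -n - \tfrac1n \phi^2$ with $\phi(0) = -n$ has the constant solution $\phi \equiv -n$, and since our $\theta(0) \le -n$ and the right-hand side is decreasing in $\phi$, a comparison argument forces $\theta \equiv -n$ along $\gamma$ for all $t \ge 0$ — but for this to be consistent we need $\theta(0) = -n$ exactly and all the inequalities to be equalities: the shear of the level sets vanishes along $\gamma$, $\Ric(\gamma',\gamma') = n$ along $\gamma$, and $H(\gamma(0)) = -n$. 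Equality in the Cauchy--Schwarz step means the second fundamental form of $\Sigma_t$ at $\gamma(t)$ is $-g|_{T\Sigma_t}$ (umbilic with the right factor). Then, by a now-standard argument (as in Eschenburg--Galloway type rigidity, or directly as in \cite{AnGa02,Gra20}), the vanishing shear plus umbilicity propagates off $\gamma$: one shows $\rho$ is smooth on all of $J^+(S)$, $-d\rho$ is a unit timelike geodesic gradient field whose flow is normal to every $\Sigma_t$, and the metric takes the form $g = -dt^2 + g_t$ on $[0,\infty)\times S$ with $g_t$ the pulled-back metric on the level sets. The evolution equation $\partial_t g_t = 2 K_t = -2 g_t$ (from umbilicity with factor $-1$) integrates to $g_t = e^{-2t} h$, giving the claimed isometry $J^+(S) \cong ([0,\infty)\times S, -dt^2 \oplus e^{-2t}h)$.

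The main obstacle, and the step deserving the most care, is promoting the rigidity from the single $S$-ray $\gamma$ to all of $J^+(S)$: a priori the distance function $\rho$ need only be smooth near $\gamma$, and one must rule out focal points / cut points of $S$ globally and establish that $\theta \equiv -n$ and vanishing shear hold on every level set, not just along $\gamma$. The mechanism is the usual one: smoothness of $\rho$ on an open dense set plus the support/barrier comparison shows $\Sigma_t$ has mean curvature $\le -n$ in the support sense everywhere, while the maximum principle (using that $t \mapsto \Sigma_t$ foliates and that the level sets are compact when $S$ is — here replaced by the $S$-ray plus global hyperbolicity to control the relevant region) forces equality and hence smoothness and the warped-product splitting. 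I would lean on \cite[Prop.~3.4]{AnGa02} and Graf's argument \cite{Gra20} for the non-compact bookkeeping; concretely, once $\rho$ is shown smooth with $-d\rho$ geodesic and the level sets totally umbilic with factor $-1$, the splitting is a direct ODE integration as above. I should also double-check the sign conventions: with $H \le -n$ (past-pointing normal convention applied to the future direction), the contracting level sets give warping $e^{-2t}$, consistent with de Sitter's expanding/contracting coordinate patch under the time reversal indicated in the statement.
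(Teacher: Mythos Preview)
Your Raychaudhuri inequality has a sign error: the hypothesis $\Ric \ge ng$ in timelike directions means $\Ric(\gamma',\gamma') \ge n\, g(\gamma',\gamma') = -n$ for a unit timelike $\gamma'$, so the correct bound is $\theta' \le n - \tfrac1n\theta^2$, not $\theta' \le -n - \tfrac1n\theta^2$. (Indeed, $\phi \equiv -n$ is \emph{not} a constant solution of the ODE you wrote, since $-n - n \ne 0$.) With the corrected inequality the comparison along the ray goes through as you intend: non-blowup on $[0,\infty)$ forces $\theta \equiv -n$, vanishing shear, and $H(\gamma(0)) = -n$.

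The more substantive point is the propagation from the single ray to all of $J^+(S)$, which you rightly flag as the crux but do not actually carry out. Your sketch (smoothness of $\rho$ on an open dense set, support barriers, a maximum principle, and an appeal to \cite{Gra20}) is essentially the Lorentzian Busemann-function route; this is valid, but the regularity theory it rests on is involved, and invoking \cite[Prop.~3.4]{AnGa02} does not help since that result assumes $S$ compact. The paper deliberately avoids Busemann functions and instead uses the \emph{ray horosphere} $S_\infty^- = \partial\bigl(\bigcup_k I^-(S_k^-)\bigr)$, where $S_k^-$ is the past Lorentzian sphere of radius $k$ centered at $\gamma(k)$. This achronal $C^0$ hypersurface lies in $J^-(S)$, passes through $\gamma(0)$, admits a future $S_\infty^-$-ray from each of its points, and (by sphere comparison) has support mean curvature $\ge -n$. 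Since $S$ has mean curvature $\le -n$ and meets $S_\infty^-$ locally from the future, the geometric maximum principle of \cite{AGH98b} forces $S$ to coincide with the component of $S_\infty^-$ through $\gamma(0)$. Hence \emph{every} point of $S$ is the foot of a future $S$-ray, and the Raychaudhuri rigidity argument you describe now runs globally (this is the paper's Lemma~\ref{lem:split}) to yield the warped product. The gain is that the horosphere construction is elementary and purely causal-theoretic, sidestepping Busemann regularity entirely.
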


\noindent Essentially the same argument as in the proof of Theorem \ref{thm:newHawking} shows that all nonnormal timelike geodesics to $S$ are future-incomplete.

Theorem \ref{thm:newHawking2} is an extension of \cite[Theorem C]{Gal89}
to the case of a positive cosmological constant.
The theorem  is, in fact, an immediate consequence of \cite[Theorem 5.12]{Gra20} (the $\kappa = -1$ case), which, in terms of completeness,  only requires that the future $S$-ray be  complete. The proof of the latter is based on properties of the Lorentzian Busemann function, the regularity theory of which is somewhat involved; see e.g.\ \cite{Esch88,Gal89b}. Here we present a fairly short proof of Theorem~\ref{thm:newHawking2}, which makes use of the more elementary, causal theoretic methods developed in \cite{GaVe15}; cf.\ \cite[Theorem 5.13]{GaVe17}.

\begin{lemma}\label{lem:split}
Let the assumptions be as in Theorem \ref{thm:newHawking2}. Suppose for each $x \in S$ the future directed timelike unit speed normal geodesic $\gamma_x$ is an $S$-ray. Then $J^+(S)$ is isometric $([0,\infty) \times S, -dt^2 \oplus e^{-2t} h)$, where $h$ is the induced metric on $S$.
\end{lemma}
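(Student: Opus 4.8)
The plan is to use the normal $S$-rays to introduce Gaussian-type coordinates adapted to $S$, to show that these give a global diffeomorphism onto $J^+(S)$, and then to read off the metric from a Raychaudhuri/Riccati comparison whose equality case is forced by the $S$-ray hypothesis together with the Ricci bound.

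First I would build the map. Let $u$ be the future-directed unit normal of $S$ and define $\Phi\colon[0,\infty)\times S\to M$ by $\Phi(t,x)=\gamma_x(t)$; this is defined on all of $[0,\infty)$ because the normal geodesics are future complete. Since each $\gamma_x$ is an $S$-ray, every segment $\gamma_x|_{[0,s]}$ realizes the Lorentzian distance to its endpoint, so it has no focal point of $S$ in its interior; as this holds for all $s$, there is no focal point along $\gamma_x$ at all, hence $d\Phi$ is nonsingular at every $(t,x)$ with $t>0$ (and directly at $t=0$), so $\Phi$ is a local diffeomorphism. It is injective: if $\gamma_x(a)=\gamma_y(b)=:p$ with $(a,x)\neq(b,y)$, then either $a\neq b$, which is impossible since both $S$-rays realize $d(S,p)$, or $a=b$ and $x\neq y$, in which case concatenating $\gamma_x|_{[0,a]}$ with a short future-segment of $\gamma_y$ past $p$ yields a causal curve from $S$ whose length equals the realized distance to its endpoint yet which has a genuine corner at $p$ — contradicting that maximizing causal curves are unbroken geodesics. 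It is onto $J^+(S)$: since $S$ is a spacelike Cauchy hypersurface one has $J^+(S)=S\cup I^+(S)$, and for $p\in I^+(S)$ the finite, realized distance $d(S,p)$ is attained by a timelike geodesic meeting $S$ orthogonally, i.e.\ by $\gamma_x|_{[0,d(S,p)]}$ for some $x$, so $p=\Phi(d(S,p),x)$. Hence $\Phi$ is a diffeomorphism of $[0,\infty)\times S$ onto $J^+(S)$, and by the Gauss lemma the pulled-back metric is $g=-dt^2+h_t$ with $h_0=h$, where $\{t=\mathrm{const}\}$ is precisely the level set $\{d(S,\cdot)=t\}$.

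Next comes the comparison. Let $\theta(t)$ be the mean curvature of the level set $S_t=\{t=\mathrm{const}\}$ along a fixed $\gamma_x$, so $\theta(0)=H(x)\le -n$. Tracing the Riccati equation $\mathcal S_t'=-\mathcal S_t^2-R_t$ for the self-adjoint shape operator $\mathcal S_t$ of $S_t$, with $R_t=R(\,\cdot\,,\partial_t)\partial_t$, and using Cauchy--Schwarz, $\operatorname{tr}(\mathcal S_t^2)\ge\theta^2/n$, together with $\Ric(\partial_t,\partial_t)\ge -n$, yields
\[
 \theta'(t)=-\operatorname{tr}(\mathcal S_t^2)-\Ric(\partial_t,\partial_t)\le -\frac{\theta(t)^2}{n}+n.
\]
The comparison ODE $\bar\theta'=-\bar\theta^2/n+n$ has $-n$ as an unstable equilibrium, and every solution starting below $-n$ decreases to $-\infty$ in finite parameter time. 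So if $\theta(t_1)<-n$ for some $t_1\ge0$, ODE comparison forces $\theta$ to blow up to $-\infty$ at some finite $t^\ast>t_1$; then $\int^{t^\ast}\theta=-\infty$, so the volume element $\sqrt{\det h_t}$ (whose logarithmic $t$-derivative is $\theta$) vanishes at $t^\ast$, contradicting that $h_t$ is a genuine Riemannian metric for all $t$ (equivalently, $\gamma_x(t^\ast)$ would be a focal point of $S$, which is impossible since $\gamma_x$ is an $S$-ray). Hence $\theta\ge-n$ along every $\gamma_x$; combined with $\theta(0)\le-n$ this gives $\theta(0)=-n$, and the same dichotomy then prevents $\theta$ from ever dropping below $-n$, so $\theta\equiv-n$ along every $\gamma_x$, i.e.\ on all of $J^+(S)$.

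Finally, $\theta\equiv-n$ means equality holds throughout the displayed estimate, so the Cauchy--Schwarz equality case gives $\mathcal S_t=(\theta/n)\,\mathrm{Id}=-\mathrm{Id}$ (and, incidentally, $\Ric(\partial_t,\partial_t)\equiv-n$). Since $\mathcal S_t=\tfrac12 h_t^{-1}\partial_t h_t$, the identity $\mathcal S_t=-\mathrm{Id}$ reads $\partial_t h_t=-2h_t$, so $h_t=e^{-2t}h$; thus $g=-dt^2+e^{-2t}h$ on $J^+(S)\cong[0,\infty)\times S$, which is the claimed isometry. I expect the global step — verifying that the normal $S$-rays sweep out all of $J^+(S)$ bijectively and without focal points — to be the part requiring the most care, since it combines global hyperbolicity, the Cauchy property of $S$, future completeness, and the maximizing property of the $\gamma_x$; after that the rigidity is a routine Riccati comparison, provided one keeps the sign conventions straight so that the model $(-dt^2+e^{-2t}h)$ indeed has $H\equiv-n$ for $K(X,Y)=g(\nabla_Xu,Y)$.
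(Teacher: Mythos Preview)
Your proposal is correct and follows essentially the same approach as the paper's proof: set up the normal exponential map as a diffeomorphism onto $J^+(S)$ using the $S$-ray hypothesis (no focal points, no intersections), then run the Raychaudhuri/Riccati inequality $\mathcal H'\le 1-\mathcal H^2$ and use ODE comparison to force $H\equiv -n$, whence vanishing shear and $h_t=e^{-2t}h$. The only small wrinkle is your sentence ``the same dichotomy then prevents $\theta$ from ever dropping below $-n$, so $\theta\equiv -n$'': what you actually need here is the upper bound $\theta(t)\le -n$, which follows by comparing with the equilibrium solution $\bar\theta\equiv -n$ of $\bar\theta'=-\bar\theta^2/n+n$ once you know $\theta(0)=-n$; with that clarification the argument is complete and matches the paper.
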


\noindent This lemma was proved in \cite{GaVe17} (though in a time dual setting); for the convenience of the reader we include the proof here.

\begin{proof}[Proof of the lemma]
Consider the normal exponential map,
\begin{equation}
\psi: [0, \infty) \to J^+(S) \,, \quad \psi(t, x) = \exp_x t N  \,,
\end{equation}
where $N$ is the future directed unit normal field along $S$.
Since each normal geodesic, $t \to \gamma_x(t) = \psi(t, x)$, $x \in S$ is an $S$-ray,  no two of them can intersect. Moreover, there can be no focal point to $S$ along any $\gamma_x$.
It follows that $\psi$ is a diffeomorphism, and, up to isometry,
\begin{equation}
J^+(S) = [0,\infty) \times S\, , \quad g = -dt^2 \oplus h_t  \,,
\end{equation}
where $h_t = h_{ij}(t,x) dx^idx^j$ is the induced metric on $S_t = \{t\} \times S$.

Let $H = H(t,x)$ be the mean curvature of $S_t$.  By the Raychaudhuri equation we have,
\begin{equation}\label{ray}
\frac{\partial H}{\partial t} = -\Ric(N,N) - \frac{H^2}{n} - |\sigma|^2 \,,
\end{equation}
where now  $N = \frac{\partial}{\partial t}$, and for each $t$, $\sigma$, the {\it shear}, is the trace free part of the second form $K$ of $S_t$,
\begin{equation}
\sigma_{ij} = K_{ij} - \frac{H}{n} h_{ij} \, .
\end{equation}

Using the Ricci curvature condition in \eqref{ray}, and the mean curvature assumption, we obtain
\begin{equation}
\frac{\partial \mathcal{H}}{\partial t} \le 1 - \mathcal{H}^2 \,, \quad\mathcal{H}(0) \le -1.
\end{equation}
where $\mathcal{H} = \frac{H}{n}$.
Elementary comparison with solutions to the ODE
$h' = 1 - h^2$ then shows that, in order for $\mathcal{H}$ to be defined for all $(t,x) \in [0,\infty) \times S$ (i.e., without diverging to $-\infty$ in finite time), one must have $\mathcal{H} \equiv -1$, and hence $H \equiv  -n$.  Using this in \eqref{ray}, together with $\Ric(N,N) \ge -n$, implies that the shear vanishes, $\sigma = 0$, and hence, $K_{ij} = -h_{ij}$. Then, using the formula $K_{ij} = \frac12 \frac{\partial h_{ij}}{\partial t}$, we obtain $h_{ij}(t,x) = e^{-2t}h_{ij}(0,x)$, and the result follows.
\end{proof}

\begin{proof}[Proof of Theorem \ref{thm:newHawking2}]  Let $S_{\infty}^-$ be the {\it ray horosphere} associated to the assumed $S$-ray $t \to \gamma(t)$, $t \in [0,\infty)$,
$$
S_{\infty}^- := \partial\left(\cup_k I^-(S_k^-) \right)  \,,
$$
where $S_k^-$ is the past Lorentzian sphere of radius $k$ centered at $\gamma(k)$, see
\cite[Def.~3.20]{GaVe15}.  $S_{\infty}^-$ is  a type of limit  of Lorentzian spheres with centers along $\gamma$; it is an example of what is called an achronal limit in \cite{GaVe15}.  In the present context, it satisfies a number of properties; see \cite[Lemmas 3.21, 3.22]{GaVe15}:
\begin{enumerate}
\item[(1)] $S_{\infty}^- \subset J^-(S)$, and $\gamma(0) \in S_{\infty}^-$.\smallskip

\item[(2)] $S_{\infty}^-$ is an acausal, edgeless $C^0$ hypersurface.\smallskip

\item[(3)] $S_{\infty}^-$ admits a timelike future $S_{\infty}^-$-ray from each of its points.

\end{enumerate}

Arguing in a manner very similar to the proof of Proposition \ref{prop:Hoftau} in Section~\ref{sec:Hsupport} of the present paper, but in a time dual way, one finds that $S^-_\infty$ has support mean curvature $\ge \lim_{t \to \infty} -n\coth(t) = -n$
(see Section \ref{sec:Hsupport} for the definition of support mean curvature).  Let $S^-$ be the connected component of $S^-_\infty$ which contains $\gamma(0)$. Then $S \cap S^-$ is a closed non-empty set.  Since $S$ has mean curvature $\le -n$ (in the usual smooth sense) and meets $S^-$ locally to the future near any intersection point $x \in S \cap S^-$, we can apply the geometric maximum principle for $C^0$ spacelike hypersurfaces in \cite{AGH98b}. By  \cite[Theorem 3.6]{AGH98b}, for each $x \in S \cap S^-$, there exists an open spacetime neighborhood $U$ of $x$, such that $S \cap U =  S^- \cap U$. (A certain technical condition in Theorem 3.6 is automatically satisfied since $S$ is smooth.)
It follows that $S \cap S^-$ is open and closed in both $S$ and $S^-$, and hence $S = S^-$.  But this means that $S$ admits a future $S$-ray from each of its points. The theorem now follows from Lemma \ref{lem:split}.
\end{proof}

%%%%%%%%%%%%%%%%%%%%%%%%%%%%%%%%%%%%%%%%%%%%%%%%%%%%%%%%%%%%%%%%%%%%%%%%%%%%%%%%%%%%%%%%%%%%%%%

%%%%%%%%%%%%%%%%%%%%%%%%%%%%%%%%%%%%%%%%%%%%%%%%%%%%%%%%%%%%%%%%%%%%%%%%%%%%%%%%%%%%%%%%%%%%%%%

\section{Cosmological time} \label{sec:timefcts}

In this section, we revisit two canonical choices of time function  for spacetimes relevant to cosmology: the cosmological time function and the cosmological volume function. In particular, we prove part 3 of Theorem~\ref{thm:intro1} and Theorem~\ref{thm:intro2} from the Introduction.

\subsection{The cosmological time and volume functions in relation to Hawking's theorem}

Recall that a continuous function $f \colon M \to \bR$ is called a \textit{time function} if it is strictly increasing along future-directed causal curves. The \textit{cosmological time function} is defined as
\begin{align*}
 \tau(p) :=& \sup \left\{ \dg(x,p) \colon x \in M \right\} \\
 =& \sup \left\{ \Lg(\gamma) \colon \gamma \text{ past-directed causal starting at } p \right\}
\end{align*}
Hawking's singularity theorem (point 1 of Theorem~\ref{thm:intro1}) guarantees that $\tau$ is finite under its assumptions. Finiteness alone, however, does not guarantee that $\tau$ is {\it regular} in the sense of \cite{AGH98}, since that additionally requires $\tau \to 0$ along every past-inextendible causal curve. In fact, \cite[Ex.~4.1]{AGH98} shows that regularity can fail even when $\tau$ is finite and the spacetime is globally hyperbolic (we do not know if there exist counterexamples that also satisfy the curvature assumptions of Theorem~\ref{thm:intro1}). Conversely, when $\tau$ is regular, then the spacetime is automatically globally hyperbolic, $\tau$ is a continuous time function (in fact, Lipschitz with two derivatives almost everywhere), and for every point $p$, there is a past-inextendible geodesic whose length realizes $\tau(p)$ \cite[Thm.~1.2]{AGH98}.

Similarly, the formulation of Hawking's theorem using volumes \cite[Thm.~5.2]{GH23} (based on the work of Treude and Grant \cite{TrGr13}) implies that the cosmological volume function
\begin{equation*}
 \tvol(p) := \vol (I^-(p))
\end{equation*}
is finite when the assumptions of Theorem~\ref{thm:intro1} are met, even if $H=n$. As above, we call $\tvol$ regular if $\tvol<\infty$ and $\tvol \to 0$ along every past-inextendible causal curve. Unlike the cosmological time function $\tau$, we prove below that a finite cosmological volume function $\tvol$ on a globally hyperbolic spacetime is automatically regular. On the other hand, while a regular $\tvol$ is a continuous time function, further continuity or differentiability properties like those of $\tau$ have not been established.

\begin{proposition}
 Let $(M,g)$ be a globally hyperbolic spacetime with finite cosmological volume function $\tvol$. Then $\tvol$ is regular. In particular, if $(M,g)$ satisfies the assumptions of Theorem \ref{thm:intro1}, then $\tvol$ is regular.
\end{proposition}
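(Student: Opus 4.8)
The plan is as follows. Finiteness of $\tvol$ is already assumed, so the only thing to establish for regularity is that $\tvol \to 0$ along every past-inextendible causal curve. Let $\gamma$ be such a curve, parametrized toward the past as $\gamma\colon[0,\omega)\to M$ with no limit point as $s\to\omega$, and write $p_s := \gamma(s)$. The first step is the monotonicity of chronological pasts: for $s'\ge s$ the segment of $\gamma$ from $p_s$ to $p_{s'}$ is a past-directed causal curve, so $p_{s'}\in J^-(p_s)$ and hence $I^-(p_{s'})\subseteq I^-(p_s)$. Consequently $s\mapsto \tvol(p_s)=\vol(I^-(p_s))$ is non-increasing and bounded below by $0$, so it has a limit $L\ge 0$ as $s\to\omega^-$; the whole point is to show $L=0$.

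To identify $L$, I would pick any sequence $s_k\nearrow\omega$ and invoke continuity from above of the Radon measure $\vol$: the sets $I^-(p_{s_k})$ are open, decreasing, and $\vol(I^-(p_{s_0}))=\tvol(p_{s_0})<\infty$ by hypothesis, so $L=\vol\big(\bigcap_k I^-(p_{s_k})\big)=\vol\big(\bigcap_{s\in[0,\omega)} I^-(p_s)\big)$, the last equality holding because the family $\{I^-(p_s)\}_s$ is monotone and cofinal with the chosen subsequence. This is the one place where the finiteness assumption is genuinely needed; without it, continuity from above can fail. It therefore suffices to show that $\bigcap_s I^-(p_s)=\varnothing$.

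The emptiness claim is the heart of the argument and is where global hyperbolicity is used. Suppose, for contradiction, that $q\in\bigcap_s I^-(p_s)$. Then $p_s\in I^+(q)\subseteq J^+(q)$ for every $s$, and also $p_s\in J^-(p_0)$ since $\gamma$ runs to the past from $p_0$; hence the entire curve lies in the causal diamond $K:=J^+(q)\cap J^-(p_0)$, which is compact because $(M,g)$ is globally hyperbolic. But $\gamma$ is an inextendible causal curve, and a standard fact of causality theory states that no inextendible causal curve can be imprisoned in a compact subset of a strongly causal — hence of a globally hyperbolic — spacetime. This contradiction yields $\bigcap_s I^-(p_s)=\varnothing$, so $L=\vol(\varnothing)=0$, proving that $\tvol$ is regular. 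For the final assertion: the hypotheses of Theorem~\ref{thm:intro1} include the existence of a compact Cauchy hypersurface, so $(M,g)$ is globally hyperbolic, while the volume formulation of Hawking's theorem recalled above (\cite{GH23}, Thm.~5.2) guarantees $\tvol<\infty$; the first part of the proposition then applies.

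I do not anticipate a serious obstacle. The argument is essentially the nesting/monotonicity of chronological pasts plus one compactness fact, and the only steps requiring a little care are the measure-theoretic passage from the monotone net $\{I^-(p_s)\}$ to a sequence (routine, via cofinality) and citing the non-imprisonment lemma with the correct causality hypothesis (supplied by global hyperbolicity); the \emph{main} point to highlight is simply that finiteness of $\tvol$ is exactly what makes the continuity-from-above step legitimate.
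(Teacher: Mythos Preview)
Your proposal is correct and follows essentially the same approach as the paper: both use continuity from above of $\vol$ (legitimized by the finiteness of $\tvol$) on the nested family $\{I^-(\gamma(s))\}$, then derive a contradiction from non-imprisonment in the compact causal diamond $J^+(q)\cap J^-(\gamma(0))$. The only cosmetic difference is that the paper argues the intersection has positive measure and then picks a point in it, whereas you show directly that the intersection must be empty; the underlying compactness/non-imprisonment step is identical.
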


\begin{proof}
 By \cite[Thm.~5.2]{GH23}, the assumptions of Theorem \ref{thm:intro1} guarantee that $\tvol$ is finite-valued. Suppose $\tvol$ is not regular (but still finite valued), then there exists a past-inextendible past-directed causal curve $\gamma : [0,\infty) \to M$ such that $\tvol \circ \gamma(s) \not\to 0$ as $s \to \infty$. This implies by standard measure theory \cite[Thm.~1.2.5]{KrPa08} that
 \begin{equation*}
  \vol \left( \bigcap_{s \in [0,\infty)} I^-(\gamma(s)) \right) = \lim_{s\to\infty} \vol \left(I^-(\gamma(s) \right) = \lim_{s\to \infty} \tvol(\gamma(s)) > 0.
 \end{equation*}
 It follows that there exists some point $p \in \bigcap_{s \in [0,\infty)} I^-(\gamma(s))$. Let $q := \gamma(0)$. Since, by global hyperbolicity, the causal diamond $J(p,q) := J^+(p) \cap J^-(q)$ is compact, in order to avoid a strong causality violation, $\gamma$ must eventually leave  $J(p,q)$ to the past, not to return.  But this contradicts the fact that  $\gamma(s) \in I^+(p)$ for all $s \in [0,\infty)$.
\end{proof}

\subsection{Cauchyness of the level sets of the cosmological time and volume functions} \label{sec:cauchyness}

In \cite[Cor.~2.6]{AGH98}, it is proven that the level sets $S_T$ of the cosmological time function $\tau$ are \textit{future} Cauchy hypersurfaces (i.e.\ any past-inextendible timelike curve starting in $J^+(S_T)$ must intersect $S_T$). In this section, we investigate when the level sets are Cauchy hypersurfaces.

\begin{theorem} \label{thm:taualmostcauchy}
 Let $(M,g)$ be a future timelike geodesically complete spacetime satisfying at least one of the following conditions:
 \begin{enumerate}
  \item $(M,g)$ contains a compact Cauchy hypersurface.
  \item The future causal boundary of $(M,g)$ is spacelike.
 \end{enumerate}
 Then, if the cosmological time function $\tau \colon M \to (0,\infty)$ is regular, its level sets $S_T := \{ \tau = T \}$ are Cauchy hypersurfaces, for all $T>0$.
\end{theorem}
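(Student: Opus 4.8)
The plan is to show that every inextendible causal curve meets $S_T$, using the fact (from \cite[Cor.~2.6]{AGH98}) that $S_T$ is already a future Cauchy hypersurface. Since $\tau$ is a regular cosmological time function, it is a continuous time function on a globally hyperbolic spacetime, so $S_T = \{\tau = T\}$ is an acausal topological hypersurface; thus it suffices to show $S_T$ has no ``edge,'' equivalently that no inextendible causal curve avoids it. Because $S_T$ is future Cauchy, any inextendible causal curve $\gamma$ that misses $S_T$ must eventually remain in $I^-(S_T) = \{\tau < T\}$ in the future direction (it cannot cross from $\tau < T$ to $\tau > T$ without hitting $S_T$, and if it were eventually in $\{\tau > T\}$ its past-inextendible portion would have to cross $S_T$). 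So the crux is: a future-directed causal curve $\gamma$ with $\tau(\gamma(s)) < T$ for all $s$, which is future-inextendible, leads to a contradiction. Along such a $\gamma$, the cosmological time $\tau$ is increasing and bounded above by $T$, hence converges to some $T_\infty \le T$; and by future timelike geodesic completeness together with the maximizing geodesics guaranteed by \cite[Thm.~1.2]{AGH98}, I expect to be able to push $\gamma$ to have length going to $+\infty$ or to reach a point with $\tau = T$, giving the contradiction.

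More concretely, first I would reduce to the case $\gamma$ is a future-directed \emph{timelike} curve with $\tau \circ \gamma \to T_\infty \le T$ and $\gamma$ future-inextendible. For each $s$, by regularity of $\tau$ there is a past-inextendible maximizing geodesic $\sigma_s$ through $\gamma(s)$ realizing $\tau(\gamma(s))$. Concatenating a short initial timelike segment with $\sigma_s$ and using the reverse triangle inequality, one sees $\tau$ increases along $\gamma$ at a rate controlled by arclength, so the $g$-arclength of $\gamma$ from $\gamma(0)$ is bounded by $T_\infty - \tau(\gamma(0)) \le T$. A future-inextendible timelike curve of finite Lorentzian arclength in a globally hyperbolic spacetime that is future timelike geodesically complete has an endpoint in the future causal boundary; this is where the two alternative hypotheses enter. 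In case (2), the future causal boundary is spacelike, so such an endpoint would be an ideal point; but I claim $\tau$ extends continuously to value $T_\infty$ there, and because the boundary point lies in the \emph{future} of everything along $\gamma$ while $\tau < T$ strictly below it, one can find a maximizing geodesic to that ideal point of length exactly $T_\infty$, which forces the nearby level set $S_{T_\infty}$ (hence $S_T$ after adjusting $T$) to be met — contradiction unless $\gamma$ actually hit $S_T$. In case (1), the existence of a compact Cauchy hypersurface $\Sigma$, together with a standard limit-curve / Arzelà–Ascoli construction, produces a past $S_T$-ray or a past-inextendible maximizing geodesic accumulating on a point of $S_T$, again forcing $\gamma$ to cross $S_T$.

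The main obstacle, I expect, is handling the behavior of $\gamma$ ``at infinity'' rigorously: one must exclude the possibility that $\gamma$ is future-inextendible with $\tau \circ \gamma \to T_\infty < T$ \emph{and} $\gamma$ simply ``runs off to infinity'' without its past diamond growing — in other words, showing that the obstruction is genuinely captured by the future causal boundary hypotheses. In case (1) this is essentially a compactness argument: the past Cauchy development of $\Sigma$ together with future timelike completeness traps $\gamma$ appropriately. In case (2) one needs that a spacelike future causal boundary point is the future endpoint of a timelike geodesic of maximal length equal to $T_\infty$, which is exactly the kind of statement the cosmological time function is built to provide, combined with the regularity theory of \cite{AGH98}. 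I would also need to verify that $S_T \subset \{\tau = T\}$ is connected / that distinct components do not cause trouble, but since being Cauchy is a property about \emph{all} inextendible causal curves, the argument above applies uniformly. Once these points are in place, no inextendible causal curve can avoid $S_T$, and since $S_T$ is acausal, it is a Cauchy hypersurface.
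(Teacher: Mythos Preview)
Your reduction is correct and matches the paper: one must rule out a future-inextendible causal curve $\sigma$ along which $\tau < T$ throughout, and you correctly observe that the Lorentzian length of any segment of $\sigma$ is then bounded by $T$. However, the way you extract a contradiction has a genuine gap.

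The issue is that a future-inextendible causal curve of finite Lorentzian length is \emph{not} in itself contradictory, even under future timelike geodesic completeness; only a future-inextendible timelike \emph{geodesic} of finite length would be. Your proposal never clearly produces such a geodesic. In case~(1) you speak of constructing a ``past $S_T$-ray,'' but the direction is wrong: what is needed is a \emph{future}-inextendible maximizing timelike geodesic of length $\le T$. In case~(2) you attempt to argue via the future causal boundary directly, invoking a ``maximizing geodesic to an ideal point'' and a continuous extension of $\tau$ to the boundary; neither of these is standard or easy to make rigorous, and the sketch does not indicate how.

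The paper's argument avoids both problems by a single unified construction. Fix an auxiliary smooth Cauchy hypersurface $S$ in the past of $\sigma(0)$, and for each $n$ take a geodesic $\gamma_n$ from $S$ to $\sigma(n)$ realizing $\dg(S,\sigma(n))$; its length is at most $\tau(\sigma(n)) < T$. The key point is that the footpoints $\gamma_n(0) \in S$ lie in a compact set: this is immediate if $S$ is compact (case~1), and in case~2 it follows from the Wald--Yip lemma \cite[Lem.~2]{WaYi81}, which says precisely that a spacelike future causal boundary forces $\overline{I^-(\sigma)} \cap S$ to be compact. This is the ingredient your case~(2) argument is missing. With compactness of footpoints, the limit curve theorem yields a future-inextendible timelike $S$-ray $\gamma_\infty$ of length $\le T$, directly contradicting future timelike geodesic completeness. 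Thus both hypotheses feed into the \emph{same} limit-curve argument via compactness of initial data, rather than requiring separate treatments.
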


Recall that the future causal boundary is said to be spacelike when there are no proper inclusions between TIPs. We shall only use this assumption in order to apply a Lemma from \cite{WaYi81}. See \cite{GKP72} for more background on causal boundaries.

\begin{proof}
Let $\sigma \colon \bR \to M$ be any inextendible causal curve, and $T >0$. We have to show that $\sigma$ intersects $S_T$. This follows by continuity if there exists $s,t$ such that $\tau(\sigma(s)) \leq T \leq \tau(\sigma(t))$. The existence of $s$ follows from regularity of the cosmological time function. The existence of $t$ can be proven by contradiction as follows.

Suppose that $\tau(\sigma(r)) < T$ for all $r \in \bR$. Choose a Cauchy hypersurface $S$ in the past of $\sigma(0)$ (note that $(M,g)$ is globally hyperbolic by assumption, since its cosmological time function is regular). By standard properties of Cauchy hypersurfaces \cite[Thm.~14.44]{ONeill}, for each $n$ there exists a geodesic $\gamma_n : [0,b_n] \to M$ starting normal to $S$ and ending at $\sigma(n)$, which realizes the Lorentzian distance $\dg(S,\sigma(n))$. For later convenience, we take the $\gamma_n$ to be parametrized by $h$-arclength for some auxiliary complete Riemannian metric $h$ on $M$.

By assumption, the starting points $\gamma_n(0)$ are all contained in some compact set $K$, since:
\begin{enumerate}
 \item If $(M,g)$ has compact Cauchy hypersurfaces, we take $K=S$.
 \item If the causal boundary is spacelike, then by \cite[Lem.~2]{WaYi81}, $K:=\overline{I^-(\sigma)} \cap S$ is compact (here we have used that $I^-(\sigma)$ is a TIP, see \cite[Thm.~2.3]{GKP72}).
\end{enumerate}
By compactness of $K$, the sequence of starting points $\gamma_n(0)$ converges (up to taking a subsequence). Moreover, since $h$ is complete, $\sigma$ is inextendible, and $(M,g)$ is globally hyperbolic (hence non-totally imprisoning), it follows that $b_n \to \infty$. Then, by the limit curve theorem \cite[Thm.~2.51]{Min19}, a subsequence of $(\gamma_n)_{n \in \bN}$ converges uniformly on compact subsets to an inextendible causal limit curve $\gamma_\infty : [0,\infty) \to M$. The construction of $\gamma_\infty$ is depicted in Figure \ref{fig:pf}. Moreover, because the $\gamma_n$ maximize the Lorentzian distance to $S$, so does $\gamma_\infty$ \cite[Thm.~2.61]{Min19}. Then $\gamma_\infty$ is a future $S$-ray and, in particular, it is normal to $S$ and hence timelike.

\begin{figure}
\centering
\begin{tikzpicture}[scale=0.8]
%surface
 \draw [thick] plot [smooth, tension=0.9] coordinates { (-4,-1) (-2.5,-0.8) (-1,-1) (0.5,-1.2) (2,-1)};
 \draw [thick] plot [smooth, tension=0.9] coordinates { (-2,1) (-0.5,1.2) (1,1) (2.5,0.8) (4,1)};
 \draw [thick] plot [] coordinates { (-4,-1) (-3,0) (-2,1)};
 \draw [thick] plot [] coordinates { (2,-1) (3,0) (4,1)};
 %main curves
 \draw [thick] plot [smooth, tension=0.9] coordinates {(2.5,1.4) (2,2) (0,4) (-1,5.5) (-1.4,6.2)}; %sigma
 \draw [thick] plot [smooth, tension=0.9] coordinates {(0,4) (-0.2,3) (0,2) (0.2,1) (0,-0.2)}; %gamma 1
 \draw [thick] plot [smooth, tension=0.9] coordinates {(-1,5.5) (-1.3,3.5) (-1,2) (-0.9,1) (-1,-0)}; %gamma n
 \draw [thick] plot [smooth, tension=0.9] coordinates {(-1.85,6.2) (-2.1,4.5) (-1.9,3) (-1.6,1.5) (-1.7,0)}; %gamma infty
 %dashed curves
 \draw [thick,dotted] plot coordinates { (-1.4,6.2) (-1.54,6.5)}; %sigma upper
 \draw [thick,dotted] plot coordinates {(-1.85,6.2) (-1.77,6.55)}; %gamma infty upper
 %points and nodes
 \filldraw[black] (2.5,1.4) circle (2pt) node[anchor=west] {$\sigma(0)$};
 \filldraw[black] (0,4) circle (2pt) node[anchor=south west] {$\sigma(1) = \gamma_1(b_1)$};
 \filldraw[black] (-1,5.5) circle (2pt) node[anchor=south west] {$\sigma(n) = \gamma_n(b_n)$};
 \filldraw[black] (0,-0.2) circle (2pt) node[anchor=north west] {$\gamma_1(0)$};
 \filldraw[black] (-1,-0) circle (2pt) node[anchor=north] {$\gamma_n(0)$};
 \filldraw[black] (-1.7,0) circle (2pt) node[anchor=north east] {$\gamma_\infty(0)$};
 \node at (2,2.5) {$\sigma$};
 \node at (0.3,2.5) {$\gamma_1$};
 \node at (-0.9,3) {$\gamma_n$};
 \node at (-2.5,3.6) {$\gamma_\infty$};
  \node at (3,-0.75) {$S$};
\end{tikzpicture}
\caption{An illustration of the construction of $\gamma_\infty$ in the proof of Theorem \ref{thm:taualmostcauchy}.} \label{fig:pf}
\end{figure}
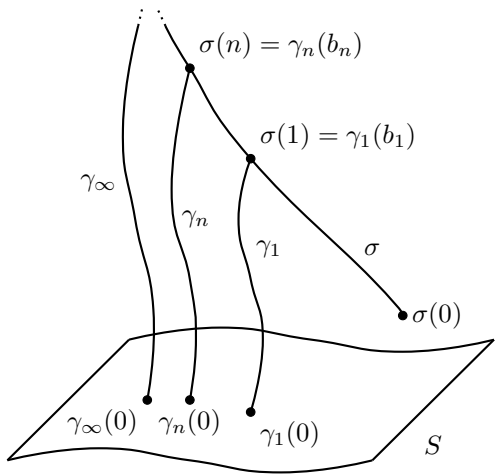

The remainder of the proof consists in showing that the assumption that $\tau$ remains bounded along $\sigma$ implies that $\gamma_\infty$ must have finite length. Notice that by definition of $\tau$, we have $\tau(\sigma(n)) \geq \Lg(\gamma_n)$. Thus $\dg(S,\sigma(n)) = \Lg(\gamma_n) < T$ for all $n$. Then, for any $a \in \bR$,
\begin{equation*}
 \Lg\left(\gamma_\infty \vert_{[0,a]}\right) = \lim_{n \to \infty} \Lg\left(\gamma_n \vert_{[0,a]}\right) \leq T,
\end{equation*}
where $\Lg(\gamma_\infty \vert_{[0,a]}) \geq \lim_{n \to \infty} \Lg(\gamma_n \vert_{[0,a]})$ is always true by upper semicontinuity of the Lorentzian length functional, while the opposite inequality is a consequence of $\gamma_n$ being maximizing for each $n$. We conclude that $\Lg(\gamma_\infty) \leq T$, in contradiction to future timelike geodesic completeness.
\end{proof}

\begin{example}[Not future timelike geodesically complete] \label{ex:cyl}
 Consider the subset of $(1+1)$-dimensional Minkowski spacetime given by $M := \left\{ (t,x) \colon 0 < t < f(x) \right\}$, where
 \begin{equation*}
  f(x) := \begin{cases}
           2+\frac{1}{2}(x-2n) &\text{for } x \in [2n,2n+1), \\
           3 - \frac{1}{2}(x - 2n) &\text{for } x \in [2n+1,2n+2).
          \end{cases}
 \end{equation*}
 Note that the future causal boundary of $(M,g)$ is spacelike (see Figure~\ref{fig:cyl}). The cosmological time function on $(M,g)$ is given by $\tau = t$ and is regular. While the level sets for $t < 2$ are Cauchy hypersurfaces, those for $t \geq 2$ are not. We can also obtain examples with compact Cauchy surfaces by identifying $(t,x) \sim (t,x+2m)$. The resulting spacetimes can be embedded into $(1+2)$-dimensional Minkowski spacetime as a finite cylinder with one ``irregularly cut'' end.
\end{example}

\begin{figure}
\centering
\begin{tikzpicture}
 \draw[thick,dashed]  foreach \k in {0, ..., 3} {(2*\k,2) -- (2*\k+1,2.5) -- (2*\k+2,2)};
 \draw[thick,red]  foreach \k in {0, ..., 3} {(2*\k+0.5,2.25) --  (2*\k+1.5,2.25)};
 \draw[thick,blue] (0,1) -- (8,1);
 \draw[thick,dashed] (0,0) -- (8,0);
 \draw[thick,dotted] (0,0) -- (0,2);
 \draw[thick,dotted] (8,0) -- (8,2);
 \node[anchor=north] at (1,1) {$t=1$};
 \node[anchor=north] at (1,2.25) {$t=\frac{9}{4}$};
 \draw (7,0.4) -- (7.2,0.6) -- (6.8,0.6) -- (7.2,0.2) -- (6.8,0.2) -- cycle;
 \draw[->] (7,0.4) -- (7,0.8);
\end{tikzpicture}
\caption{The spacetime in Example \ref{ex:cyl}, with a Cauchy and a non-Cauchy level set. The dashed lines represent the end of the spacetime. The dotted vertical lines can be either identified, or infinitely many copies of the same picture can be attached.} \label{fig:cyl}
\end{figure}
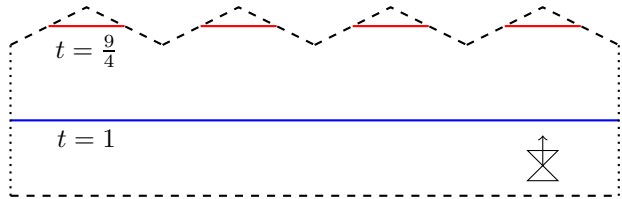

It remains open whether the assumption 1 or 2 in Theorem \ref{thm:taualmostcauchy} is necessary.
The (most obvious) analogue to Theorem \ref{thm:taualmostcauchy} for the cosmological volume function is false.

\begin{proposition}
 There exist spacetimes $(M,g)$ that are future volume complete, i.e.\ $\vol(I^+(p)) = \infty$ for all $p \in M$, and where $\tvol$ is regular, but where not all (non-empty) level sets of $\tvol$ are Cauchy hypersurfaces.
\end{proposition}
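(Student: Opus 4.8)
The plan is to produce an explicit example: a ``slit'' cut out of a half of two-dimensional Minkowski space. Concretely, I would take
\[
 M := \{(t,x) \in \bR^2 : t > 0\} \setminus L, \qquad L := \{(t,0) : 1 \le t \le 2\},
\]
with the flat metric $-dt^2 + dx^2$ and time orientation $\partial_t$. The guiding observation is that ``future volume complete'' is a much weaker condition than ``future timelike geodesically complete'': unlike in Theorem~\ref{thm:taualmostcauchy}, nothing forces $M$ to be globally hyperbolic, and deleting the vertical segment $L$ destroys global hyperbolicity. By Geroch's theorem $M$ then has no Cauchy hypersurface whatsoever, so a fortiori no level set of $\tvol$ is one; it only remains to verify that the slit is mild enough that $\tvol$ stays regular and $M$ stays future volume complete.

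First I would check future volume completeness. For $p = (t_0,x_0) \in M$, a straight causal ray from $p$ in whichever of the directions $x \to +\infty$, $x \to -\infty$ leads away from the line $\{x = 0\}$ never meets $L$, so $I^+(p)$ contains a set of the form $\{(t,x) : \pm(x - x_0) > 0,\ t - t_0 > |x - x_0|\}$, of infinite Lebesgue measure; hence $\vol(I^+(p)) = \infty$. (It is essential here that $L$ is \emph{vertical}: a causal curve can always skirt it by being at $x \ne 0$ while $t \in [1,2]$, whereas a horizontal segment would trap the future of points lying beneath it.)

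Next, regularity of $\tvol$. Finiteness is immediate: $I^-(p)$ is contained in its Minkowski counterpart, a triangle of area $t_0^2$, so $\tvol(p) \le t_0^2 < \infty$. For the decay along past-inextendible causal curves, the key point is that no such curve can ``terminate on'' $L$: a past-directed causal curve approaching a point $(s,0) \in L$ can always be prolonged, since a short timelike segment reaches points $(s - \varepsilon, \rho) \in M$ with $\rho \ne 0$ small. Hence every past-inextendible causal curve has $t \to 0$, and since $I^-(\gamma(r))$ lies in the past triangle of $\gamma(r)$, whose area tends to $0$, we get $\tvol(\gamma(r)) \to 0$; thus $\tvol$ is regular. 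Finally, $M$ is not globally hyperbolic: with $p = (0.8, -0.1)$ and $q = (2.2, 0.1)$ one checks that $p$ reaches each point $(1.5, -\delta)$, $\delta > 0$ small, by a causal curve staying in $\{x < 0\}$, and that each such point reaches $q$ by a causal curve crossing $\{x = 0\}$ just above $L$ (at a time slightly larger than $2$); so $(1.5,-\delta) \in J^+(p) \cap J^-(q)$ for all small $\delta$, yet $(1.5,-\delta) \to (1.5,0) \in L$, which is not in $M$, so $J^+(p) \cap J^-(q)$ is not compact in $M$. Therefore $M$ has no Cauchy hypersurface, while each level set $\{\tvol = c\}$, $c > 0$, is non-empty (e.g.\ $(\sqrt c, x) \in \{\tvol = c\}$ for $|x|$ large, where $I^-$ is the full triangle of area $c$); hence not all non-empty level sets of $\tvol$ are Cauchy hypersurfaces.

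The genuine difficulty is not any of the above verifications but realizing at the outset that a globally hyperbolic example is hopeless. Indeed, were $(M,g)$ globally hyperbolic, future volume complete, and possessed of a non-Cauchy level set $\{\tvol = c\}$, then some inextendible causal curve $\sigma$ would satisfy $\tvol < c$ along all of it (the reverse inequality being excluded by regularity, since $\tvol \to 0$ to the past), and following $\sigma$ to the future would give $\vol(I^-(\sigma)) = \sup_r \tvol(\sigma(r)) \le c < \infty$ --- a future-inextendible causal curve with finite-volume chronological past. Such a curve cannot exist in a globally hyperbolic future-volume-complete spacetime: being future-inextendible it escapes every compact set, and its causal past then unavoidably engulfs an infinite-volume piece of $M$. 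This is precisely why the example must forgo global hyperbolicity, in harmony with the fact that hypotheses~1 and~2 of Theorem~\ref{thm:taualmostcauchy} --- deliberately omitted from the present statement --- would restore it.
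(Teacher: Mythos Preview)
Your example does not work: the cosmological volume function on the slit half-plane is \emph{not} regular. Consider the past-directed timelike curve $\gamma(s) = (3-s,0)$ for $s \in [0,1)$. Every point on $\gamma$ lies in $M$ (since $3-s > 2$), but $\gamma(s) \to (2,0) \in L$, which is not in $M$; hence $\gamma$ has no past endpoint in $M$ and is past-inextendible. Yet $I^-(\gamma(s))$ is, up to the measure-zero slit, the full Minkowski past triangle of area $(3-s)^2$, so $\tvol(\gamma(s)) \to 4 \neq 0$. Your claimed ``prolongation'' to a point $(s-\varepsilon,\rho)$ with $\rho \neq 0$ is not a continuous extension of this particular curve; inextendibility is a property of the given curve, not of some nearby rerouting. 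More generally, removing a closed set from a spacetime essentially always destroys regularity of $\tvol$ (and of $\tau$), because curves that run into the removed set become inextendible without their pasts shrinking.

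Your closing heuristic is also incorrect, and this matters because it misidentifies where the content of the proposition lies. You assert that in a globally hyperbolic, future volume complete spacetime, a future-inextendible causal curve must have infinite-volume chronological past; but ``$\sigma$ escapes every compact set'' does not force $I^-(\sigma)$ to contain any set of the form $I^+(p)$, so future volume completeness gives no lower bound on $\vol(I^-(\sigma))$. In fact the paper's own construction is globally hyperbolic: one starts from the spacetime $\{0 < t < f(x)\}$ of Example~\ref{ex:cyl} (already globally hyperbolic, with Cauchy surfaces $\{t = \text{const} < 2\}$) and multiplies the metric by a conformal factor $\Omega$ that equals $1$ on $\{t < 2\}$ and blows up toward the top boundary. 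Conformal changes preserve the causal structure, hence global hyperbolicity; the blow-up of $\Omega$ makes every $I^+(p)$ have infinite volume; and the vertical curve $s \mapsto (s,0)$, which is future-inextendible since $f(0)=2$, has $I^-$ contained in the region $\{t<2\}$ where $\Omega \equiv 1$, so its past has finite volume and $\tvol$ stays bounded along it. The point of the proposition is precisely that, unlike for $\tau$ in Theorem~\ref{thm:taualmostcauchy}, future \emph{volume} completeness is too weak to force Cauchyness of the level sets even in the globally hyperbolic setting.
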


\begin{proof}[Proof sketch]
 Such an example can be obtained by modifying Example \ref{ex:cyl} with a conformal factor $\Omega$ such that $\Omega \equiv 1$ for $t < 2$, and $\Omega(t,x) \to \infty$ fast enough as $(t,x) \to (f(x_0),x_0)$, for any $x_0 \neq 2n$. This asymptotic behavior guarantees that $\vol(I^+(p)) = \infty$ for all $p \in M$, since $I^+(p) \cap \{ t>2 \} \neq \emptyset$ for all $p \in M$. On the other hand, the causal curve $\gamma \colon (0,2) \to M,\ s \mapsto (s,0)$ has the property that $I^-(\gamma) \subset \{t<2\}$, from which one can easily see that
 \begin{equation*}
  \lim_{s \to 2} \tvol(\gamma(s)) = \vol (I^-(\gamma)) < \infty,
 \end{equation*}
 implying that $\gamma$ does not reach the level sets for large values of $\tvol$. Other curves do reach those level sets, again thanks to the asymptotic behaviour of $\Omega$.
\end{proof}

\subsection{Mean curvature of the level sets of the cosmological time function} \label{sec:Hsupport}

Proposition \ref{prop:Hoftau} below establishes an upper bound on the mean curvature of the level sets of the cosmological time function $\tau$. Since $\tau$ need not be differentiable everywhere, however, the level sets are in general only $C^0$ hypersurfaces, so the mean curvature bound is in a weak sense.

Such a notion of weak mean curvature bound was introduced by Eschenburg~\cite[Section 4]{Esch89} in the Riemannian case, and developed further in both the Lorentzian and Riemannian cases by Andersson, Howard and the first author in \cite{AGH98b}. At the end of this section, we prove a generalization of the Hawking singularity theorem to $C^0$ hypersurfaces with weak mean curvature bounds.

%\newpage

\begin{definition} \label{def:Hsupport}
 Let $(M,g)$ be a smooth spacetime and $S$ a $C^0$ hypersurface in $M$. We say that $\Sigma$ is a {\it future support hypersurface} to $S$ near $q \in S$, if $q \in \Sigma$ and $\Sigma$ is a smooth spacelike hypersurface such that, in a neighborhood of $q$, $\Sigma \subset J^+(S)$. Moreover, we say that $S$ has {\it mean curvature $H \leq \beta$ at $q$ in the support sense}, if for every $\epsilon > 0$, there exists a future support hypersurface $\Sigma$ to $S$ near $q$ with mean curvature $H_\Sigma \leq \beta + \epsilon$. Finally, we say that that $S$ has {\it mean curvature $H \leq \beta$ in the support sense} if this holds at every point $q \in S$, and {\it $H < \beta$ in the support sense} if for every point $q$, there is a $\delta > 0$ such that $H \leq \beta - \delta$ at $q$ in the support sense.
\end{definition}
Analogously, we define lower mean curvature bounds in the support sense by using past support hypersurfaces. At this point, it is not obvious that $H \leq \beta$ and $H > \beta$ are mutually exclusive; one can show this using the maximum principle \cite[Thm.~1]{Esch89} in an argument similar to the proofs of Theorems~\ref{thm:Hawsupport} and~\ref{thm:Hawsupport2} below.

\begin{proposition} \label{prop:Hoftau}
  Let $(M,g)$ be an $(n+1)$-dimensional spacetime with $\Ric \geq -n\kappa g$, $\kappa \in \bR$, in timelike directions, equipped with a regular cosmological time function $\tau$. Then each level set $S_T := \{ \tau = T\}$ has mean curvature $H_T \leq \beta_{\kappa,T}$ in the support sense, where

\begin{equation}\label{beta}
\beta_{\kappa,T} = \begin{cases}
n \sqrt{|\kappa|} \coth(\sqrt{|\kappa|} T)\, , & \kappa < 0 \\
 \frac{n}{T} \, , & \kappa = 0  \\
n \sqrt{\kappa} \cot(\sqrt{\kappa} T) \, ,  & \kappa > 0  \,.
 \end{cases}
\end{equation}
\end{proposition}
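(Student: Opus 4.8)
The plan is to produce, for each point $q \in S_T$ and each $\epsilon > 0$, a smooth spacelike future support hypersurface $\Sigma$ to $S_T$ near $q$ whose mean curvature is at most $\beta_{\kappa,T} + \epsilon$. The natural candidate for $\Sigma$ is a piece of a ``past Lorentzian sphere'': by \cite[Thm.~1.2]{AGH98}, regularity of $\tau$ guarantees that there is a past-inextendible unit-speed timelike geodesic $\gamma$ realizing $\tau(q)$, i.e.\ with $\gamma(0) = q$ and $\dg(\gamma(T'), q) = T'$ for all $T' \in [0,T]$ (writing $\gamma$ as emanating to the past from $q$). Fix a parameter $t_0$ slightly less than $T$, set $p := \gamma(t_0)$, and let $\Sigma$ be the level set $\{\dg(p, \cdot) = t_0\}$ near $q$, i.e.\ the future Lorentzian sphere of radius $t_0$ about $p$. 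Near $q$ this is a smooth spacelike hypersurface (no focal points of $p$ occur along $\gamma$ up to $q$ because $\gamma$ is maximizing, and $q$ itself lies strictly before the first focal point if we chose $t_0 < T$; one shrinks the neighborhood of $q$ accordingly). First I would verify the support condition: every point $x$ near $q$ on $\Sigma$ satisfies $\dg(p,x) = t_0$, hence $\tau(x) \geq \dg(p,x) + \dg(\text{(the past ray from }p\text{)}) $... more directly, any point on $\Sigma$ is reached from $p$ by a maximizing geodesic of length $t_0 < T$, so since $\gamma$ continues past $p$ to the past realizing $\tau(p) = T - $ (wait, $\tau(p) \geq \dg(\gamma(\text{past end}), p)$)... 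The clean statement is that $\tau|_{\Sigma} \leq t_0 + (\tau(q) - \text{dist from }q) $; more carefully, for $x \in \Sigma$ one has $\tau(x) \le T$ is not what we want. We want $\Sigma \subset J^+(S_T)$ locally, equivalently $\tau \geq T$ on $\Sigma$ near $q$ — but $\tau(q) = T$ and generic nearby points of $\Sigma$ could have $\tau$ slightly less. This is where the $\epsilon$ enters: one instead takes $p$ slightly \emph{beyond} the past endpoint logic — actually the correct choice (as in \cite{AGH98b, GaVe15}) is to center the sphere so that its radius is $\tau(q) = T$ exactly would be degenerate, so one perturbs. I would follow the construction in \cite[Sec.~3]{AGH98b}: use the reverse triangle inequality $\tau(x) \geq \dg(p,x)$ for $p$ fixed in the past of everything, so the sphere $\{\dg(p,\cdot) = T\}$ (radius exactly $T$) lies in $\{\tau \geq T\} = J^+(S_T) \cup S_T$ locally and passes through $q$; the price is that this sphere may fail to be smooth at $q$ (first focal/cut point), which is exactly why we retreat to radius $t_0 < T$ and pick up an error that tends to $0$ as $t_0 \to T$.

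The second and main computational step is the mean curvature estimate. Along the maximizing normal geodesic $\gamma$ from $p$ to $q$, the mean curvature $H_\Sigma$ of the distance sphere at the endpoint is governed by the Riccati (Raychaudhuri) equation $\mathcal{H}' = -\tfrac{1}{n}\Ric(\dot\gamma,\dot\gamma) - \tfrac{1}{n}|\sigma|^2 \cdot n \le \ldots$; more precisely, writing $\mathcal{H} = H_\Sigma/n$, one has $\mathcal{H}' \le \kappa - \mathcal{H}^2$ using $\Ric \geq -n\kappa g$ and discarding $-|\sigma|^2 \le 0$, together with the initial condition $\mathcal{H} \to +\infty$ as $t \to 0^+$ (the sphere of radius $0$). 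Standard Riccati comparison with the solution of $h' = \kappa - h^2$ satisfying $h(t) \sim 1/t$ at $t=0^+$ — namely $h(t) = \sqrt{|\kappa|}\coth(\sqrt{|\kappa|}\,t)$ for $\kappa < 0$, $h(t) = 1/t$ for $\kappa = 0$, and $h(t) = \sqrt{\kappa}\cot(\sqrt{\kappa}\,t)$ for $\kappa > 0$ — yields $\mathcal{H}(t_0) \le h(t_0)$, i.e.\ $H_\Sigma \le n\,h(t_0) = \beta_{\kappa,t_0}$. Since $\beta_{\kappa,\cdot}$ is continuous in its second argument (and decreasing for $\kappa \le 0$; for $\kappa > 0$ one needs $t_0 < \pi/\sqrt{\kappa}$, which holds because finiteness of $\tau$ forces $T < \pi/\sqrt\kappa$ by the same comparison — this is the Bonnet–Myers-type bound implicit in \cite{AnGa02}), we get $H_\Sigma \le \beta_{\kappa,t_0} \to \beta_{\kappa,T}$ as $t_0 \uparrow T$. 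Choosing $t_0$ close enough to $T$ makes $H_\Sigma \le \beta_{\kappa,T} + \epsilon$, completing the verification of Definition~\ref{def:Hsupport}.

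I would organize the write-up as: (i) recall from \cite[Thm.~1.2]{AGH98} the existence of the maximizing past ray $\gamma$ through $q$ realizing $\tau(q) = T$; (ii) for $t_0 \in (0,T)$ close to $T$, set $p = \gamma(t_0)$ and let $\Sigma$ be a neighborhood of $q$ in the future distance sphere $\{\dg(p,\cdot) = t_0\}$, noting $\Sigma$ is smooth and spacelike there because $q$ precedes the first focal point of $p$ along $\gamma$ (as $\gamma|_{[0,t_0]}$ extends to the maximizing segment $\gamma|_{[0,T]}$, hfocal-point-free); (iii) show $\Sigma \subset J^+(S_T)$ near $q$ via $\tau(x) \ge \dg(p,x) = t_0$... — here is the genuine subtlety: this only gives $\tau \ge t_0 < T$, not $\tau \ge T$. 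The fix is to \emph{not} move $p$ inward but rather take $\Sigma = \{\dg(p,\cdot) = \dg(p,q)\}$ with $p = \gamma(t_0)$ and $\dg(p,q) = t_0$, and then observe that the support condition "$\Sigma \subset J^+(S_T)$ near $q$" in Definition~\ref{def:Hsupport} should be read with the time orientation making $S_T$ a future barrier — i.e.\ I will double-check the sign convention in Definition~\ref{def:Hsupport}: a \emph{future} support hypersurface lies in $J^+(S)$, so it lies to the \emph{future}, meaning $\tau \ge T$ on it; the distance sphere about a \emph{past} center $p$ indeed satisfies $\tau \ge t_0$, and to reach $\ge T$ one needs $t_0 = T$, recovering the focal-point obstruction. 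Therefore the honest approach keeps $t_0 = T$, uses the (possibly nonsmooth) sphere, and smooths it: at $q$, replace the sphere by a support hypersurface from \emph{above} using the same Raychaudhuri bound evaluated as a limit, exactly as in \cite[proof of Prop.~3.x]{AGH98b} — this limiting/smoothing at the cut locus is the step I expect to be the main obstacle, and I would handle it by citing the support-hypersurface machinery of \cite{AGH98b} rather than reproving it. The curvature comparison (step ii–iii above) is then routine.
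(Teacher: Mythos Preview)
Your approach is the same as the paper's: use the future Lorentzian sphere centered at a point $p = \gamma(t_0)$ on the $\tau$-realizing geodesic as the support hypersurface, and bound its mean curvature by Raychaudhuri comparison. The comparison step and the limit $t_0 \to T$ are fine. The difficulty you flag as ``the main obstacle'' (support condition) is not a real obstacle, and your proposed workaround (taking $t_0 = T$ and invoking smoothing machinery) is unnecessary.

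The support condition $\Sigma \subset J^+(S_T)$ for the sphere of radius $t_0 < T$ centered at $p$ \emph{does} hold, and you nearly wrote down the reason before abandoning it. The missing piece is that $\tau(p) = T - t_0$: since $\gamma$ realizes $\tau(q) = T$ and is maximal on each segment, one has $\tau(\gamma(s)) = T - s$ in your parametrization (this is part of \cite[Thm.~1.2]{AGH98}). Then for any $x \in \Sigma$ near $q$, the reverse triangle inequality gives
\[
\tau(x) \;\ge\; \tau(p) + \dg(p,x) \;=\; (T - t_0) + t_0 \;=\; T,
\]
so $x \in \{\tau \ge T\} = J^+(S_T)$. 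Equivalently (this is how the paper phrases it): the segment of $\gamma$ from $p$ to $q$ realizes the distance from $p$ to $S_T$, i.e.\ $\dg(p, S_T) = t_0$, hence every $y \in S_T$ satisfies $\dg(p,y) \le t_0$, so $S_T$ lies locally to the past of the sphere $\{\dg(p,\cdot) = t_0\}$. Your attempt ``$\tau(x) \ge \dg(p,x) = t_0$'' dropped the $\tau(p)$ term.

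On smoothness of $\Sigma$ at $q$: your instinct that choosing $t_0 < T$ fixes this is correct, and the reason is cut-locus symmetry. Since $\gamma$ extends as a maximizing geodesic strictly to the past of $p$ (down to parameter values in $(t_0, T)$), the point $p$ is not in the past cut locus of $q$; hence $q$ is not in the future cut locus of $p$, and the distance sphere about $p$ is smooth near $q$. No retreat to a smaller radius is needed.
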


\proof  By Theorem 1.2(iii) in \cite{AGH98},  for every $q \in S_T$
there exists a future directed unit speed timelike geodesic $\gamma: (0,T] \to M$, with $\gamma(T) = q$, which is maximal on each segment and such that
$\tau(\gamma(t)) = t$.   In fact, it is easily seen that each segment $\gamma |_{[t,T]}$ realizes the distance to $S_T$.  It follows that there are no cut points to $\gamma(t)$ on $\gamma |_{[t,T]}$. This in turn implies that each future Lorentzian sphere $\Sigma_r = \{ x \in\ M: d(\gamma(t),x) = r\}$, $r \in (0,T-t]$, is smooth near $\gamma(r+t)$.  Moreover, using that $\gamma$ maximizes to $S_T$, it follows that $\Sigma_{T-t}$ lies locally to the future of  $S_T$ near
$q  \in S_T$, and hence  is an upper support hypersurface for $S_T$ near $q$.

Let $H(r)$ be the mean curvature of $\Sigma_r$ at $\gamma(r+t)$. The Raychaudhuri equation and Ricci curvature condition imply
$$
\frac{d \mathcal{H}}{dr} \le -\kappa - \mathcal{H}^2 \, ,
$$
where $\mathcal{H} = H/n$. Then by standard comparison results (see e.g. \cite[Section~2]{Esch87}, \cite[Section 1.6]{Karch89}, \cite[Section 3]{TrGr13}), $H(T-t) \le  \beta_{\kappa,T-t}$  for all values of $\kappa$.  The result now follows by letting $t \to 0$. \qed

\begin{remark}
 Proposition \ref{prop:Hoftau} gives an upper bound on the mean curvature of $S_T$, which becomes more restrictive  for large $T$. For small $T$, the mean curvature can diverge, as one might expect from a big bang. By a time reversed argument, one can also prove a lower bound on the mean curvature at those points $q \in S_T$ that are starting points of a future $S_T$-maximizing geodesic $\gamma$. (Such geodesics $\gamma$ are easily constructed since $S_T$ is future Cauchy.) The longer $\gamma$ maximizes, the more restrictive the lower bound. In particular, if, in the case $\kappa < 0$ or $\kappa = 0$, $S_T$ admits a complete future $S_T$-ray, then the mean curvature of the level set $S_t$ at the intersection with the ray satisfies
\begin{equation} \label{eq:Hray}
  - n \sqrt{|\kappa|} \leq H_t \leq \beta_{\kappa,t}
 \end{equation}
 for all $t \geq T$. One can extend such a ray to the past by attaching
 a geodesic that realizes $\tau(q)$, and, from the maximizing property of each of these geodesics, obtain a (past incomplete, future complete) timelike line (i.e. an inextendible globally maximizing timelike geodesic) along which \eqref{eq:Hray} holds for all $t>0$.
\end{remark}

We end this paper by proving two variations of Hawking's singularity theorem for mean curvature in the support sense.

\begin{theorem} \label{thm:Hawsupport}
 Let $(M,g)$ be an $(n+1)$-dimensional globally hyperbolic spacetime with $\Ric \geq n g$ in timelike directions. Suppose that $S$ is a $C^0$ Cauchy hypersurface in $(M,g)$ with mean curvature $H \geq \beta$ in the support sense, for some constant $\beta > n$. Then $(M,g)$ is past timelike geodesically incomplete. Moreover, the length of any past-directed timelike curve starting on $S$ can be no greater than
 \begin{equation*}
  \ell = \coth^{-1} \left( \frac{\beta}{n} \right).
 \end{equation*}

\end{theorem}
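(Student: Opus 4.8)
The plan is to run the classical Hawking focusing argument, but with the smooth normal geodesic congruence emanating from $S$ replaced by one emanating from a smooth \emph{past support hypersurface} $\Sigma$, so that Definition~\ref{def:Hsupport} can be used. I argue by contradiction: suppose some past-directed timelike curve $\sigma$ starting at $q_0 \in S$ has $\Lg(\sigma) > \ell$, and let $p$ be its past endpoint, so $p \in I^-(S)$ and $\dg(p,S) \ge \Lg(\sigma) > \ell$. Since $S$ is a Cauchy hypersurface and $(M,g)$ is globally hyperbolic, $J^+(p) \cap J^-(S)$ is compact; hence $\dg(p,S) < \infty$ and is realized by a maximal timelike geodesic $\gamma$ from $p$ to some $q \in S$ (limit curve theorem). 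As $\beta > n$ and $t \mapsto \coth^{-1}(t/n)$ is continuous with $\coth^{-1}(\beta/n) = \ell < \dg(p,S)$, I may fix $\epsilon > 0$ with $\beta - \epsilon > n$ and $\coth^{-1}\!\big((\beta-\epsilon)/n\big) < \dg(p,S)$. Definition~\ref{def:Hsupport} then provides a smooth spacelike past support hypersurface $\Sigma$ to $S$ near $q$, with $q \in \Sigma$, $\Sigma \subset J^-(S)$ in a neighborhood of $q$, and $H_\Sigma \ge \beta - \epsilon$ with respect to the future unit normal.

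The second step is to upgrade $\gamma$ to a curve that realizes the distance to $\Sigma$. On one hand $\dg(p,\Sigma) \ge \Lg(\gamma) = \dg(p,S)$ because $p \ll q \in \Sigma$. On the other hand, every point of $\Sigma$ near $q$ lies in $J^-(S)$, so any causal curve from $p$ to such a point prolongs to a causal curve from $p$ to $S$, showing that its length is at most $\dg(p,S)$. Hence $\gamma$, reparametrized from $q$, is a timelike geodesic meeting $\Sigma$ orthogonally and realizing the Lorentzian distance from $\Sigma$ to $p$; by the standard focal point theory for spacelike hypersurfaces it therefore has no focal point to $\Sigma$ in its interior, i.e.\ for parameter values in $(0, \dg(p,S))$.

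To close the argument I flow $\Sigma$ by its past-directed unit normal geodesics and follow the mean curvature $n\mathcal H(t)$ of the resulting hypersurfaces along $\gamma$. The initial value is $\mathcal H(0) = -H_\Sigma(q)/n \le -(\beta-\epsilon)/n < -1$, and the (time-reversed) Raychaudhuri equation \eqref{ray} together with $\Ric \ge ng$ (hence $\Ric(u,u) \ge -n$ on unit timelike $u$) gives $\mathcal H' \le 1 - \mathcal H^2$. Comparison with the solution $y(t) = -\coth\!\big(\coth^{-1}((\beta-\epsilon)/n) - t\big)$ of $y' = 1-y^2$, $y(0) = -(\beta-\epsilon)/n$, which diverges to $-\infty$ as $t \uparrow \coth^{-1}((\beta-\epsilon)/n)$, forces $\mathcal H \to -\infty$ — that is, a focal point to $\Sigma$ — at some parameter $\le \coth^{-1}((\beta-\epsilon)/n) < \dg(p,S)$, contradicting the second step. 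This yields the length bound $\ell = \coth^{-1}(\beta/n)$. Past timelike geodesic incompleteness then follows at once: any past-inextendible timelike geodesic crosses the Cauchy hypersurface $S$ and, from the crossing point onward, is a past-directed timelike curve starting on $S$, so it has length at most $\ell$ and cannot be past-complete.

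I expect the main obstacle to be the causal-theoretic bookkeeping in the second step — verifying that $\gamma$ really does realize the Lorentzian distance to $\Sigma$ and meets it orthogonally, while keeping in mind that $\Sigma \subset J^-(S)$ only locally near $q$ and that $S$ is merely $C^0$, so that one must work with the smooth hypersurface $\Sigma$ throughout rather than with "normal geodesics to $S$". The remaining ingredients — the Raychaudhuri comparison, the identification of mean-curvature blow-up with a focal point, and the passage from the length bound to incompleteness — are routine and parallel the proof of Proposition~\ref{prop:Hoftau} and the classical Hawking theorem.
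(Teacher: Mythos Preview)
Your argument is correct, but it follows a different route from the paper's. The paper does not flow the past support hypersurface $\Sigma$ into the past and invoke focal-point theory. Instead, it uses the maximizing geodesic $\gamma$ from $p$ to $q\in S$ to construct the future Lorentzian sphere $\Sigma_{T-t}$ centered at $\gamma(t)$ as a smooth \emph{future} support hypersurface to $S$ at $q$, bounds its mean curvature by $n\coth(T-t)$ via Raychaudhuri (exactly as in Proposition~\ref{prop:Hoftau}), and then applies Eschenburg's maximum principle to the pair $(\Sigma_{T-t},\Sigma')$, where $\Sigma'$ is the past support hypersurface furnished by the hypothesis: since $\Sigma_{T-t}$ lies locally to the future of $\Sigma'$ through $q$ while $H_{\Sigma_{T-t}}<\beta-\epsilon\le H_{\Sigma'}$, the maximum principle forces the two surfaces to coincide with equal mean curvatures near $q$, a contradiction. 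Your approach trades the maximum-principle step for the direct verification that $\gamma$ also maximizes to $\Sigma$ (your ``second step''), after which the classical Hawking focusing argument runs from the smooth hypersurface $\Sigma$. The payoff is that you avoid invoking the $C^0$ maximum principle altogether; conversely, the paper's route sidesteps precisely the causal-theoretic bookkeeping you identify as the main obstacle, since it never needs $\gamma$ to be orthogonal to the past support hypersurface.
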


\begin{proof}
 It suffices to prove that there do not exist any points $p \in I^-(S)$ with $T := \dg(p,S) > \ell$. Suppose that there does exist such a point $p$. Then there exists a unit-speed timelike geodesic $\gamma : [0,T] \to M$ from $p$ to $S$ of length $\Lg(\gamma) = T$. Since $\gamma \vert_{[t,T]}$ realizes the distance to $S$, then as in the proof of Proposition \ref{prop:Hoftau}, the future Lorentzian sphere $\Sigma_{T-t} = \{ x \in\ M: d(\gamma(t),x) = T-t\}$ is a smooth future support hypersurface to $S$ at $q := \gamma(T)$, with mean curvature $H(T-t) \leq \beta_{-1,T-t} = n \coth(T-t)$. In particular, for any $\epsilon >0$, there is a small enough  $t$ such that
 \begin{equation*}
  H(T-t) \leq n \coth(T-t) < n \coth (\ell) - \epsilon = \beta - \epsilon.
 \end{equation*}
 We proceed to derive a contradiction from this fact. By assumption, $H \geq \beta$ in the support sense, so for $\epsilon$ as above, there also exists a past support hypersurface $\Sigma'$ to $S$ at $q$ with mean curvature $H' \geq \beta - \epsilon$. Then
 \begin{equation} \label{eq:Hineq}
  H(T-t) < \beta - \epsilon  \leq H',
 \end{equation}
 and by the maximum principle \cite[Thm.~1]{Esch89}, near $q$ it holds that $\Sigma_{T-t} = \Sigma'$ and $H(T-t) = H'$. This is clearly impossible, since the first inequality in \eqref{eq:Hineq} is strict. We conclude that there cannot exist any points $p$ with $\dg(p,S) > \ell$, as desired.
\end{proof}

Let us comment here on which other parts of the paper can be generalized to mean curvature bounds in the support sense. Theorem \ref{thm:newHawking2} can be generalized by carefully applying the geometric maximum principle \cite{AGH98b} in the case where neither of the two surfaces is smooth, cf.~\cite[Prop. 4.3]{GaVe15}. Remark \ref{rem:rescale} about behaviour of the curvature under rescalings also applies to Theorem \ref{thm:Hawsupport}. Moreover, one can prove a version of Theorem \ref{thm:Hawsupport} for $\Ric \geq 0$ in timelike directions and $H \geq \beta >0$ in the support sense, in which case the length bound becomes $\ell = \frac{n}{\beta}$. A version for $\Ric \geq -n g$ is also possible, but in that case, the spacetime has bounded timelike diameter, so one has timelike incompleteness even without any mean curvature assumptions \cite[Thm.~11.9]{BEE}. What is more interesting is that, for $C^0$ Cauchy hypersurfaces, assuming $H > n$ in the support sense does not imply $H \geq \beta > n$, even when $S$ is compact, since the $\delta$ in Definition \ref{def:Hsupport} need not be continuous. It thus remains open whether part 1 of Theorem~\ref{thm:newHawking} holds in the support sense. At least, when assuming $S$ compact with $H > n$, we still get the following, weaker statement.

\begin{theorem} \label{thm:Hawsupport2}
 Let $(M,g)$ be an $(n+1)$-dimensional globally hyperbolic spacetime with $\Ric \geq n g$ in timelike directions. Suppose that $S$ is a compact $C^0$ Cauchy hypersurface in $(M,g)$ with mean curvature $H > n$ in the support sense. Then $(M,g)$ is past timelike geodesically incomplete.
\end{theorem}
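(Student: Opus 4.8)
The plan is to argue by contradiction, combining the existence of a past $S$-ray (available because $S$ is compact, as in the construction of Section~\ref{sec:cauchyness}) with the support-hypersurface comparison used in the proof of Theorem~\ref{thm:Hawsupport}. The whole point is that a single $S$-ray localizes the argument to one point $q_0 \in S$, so that the \emph{pointwise} positivity of $\delta$ in Definition~\ref{def:Hsupport} is all that is needed, even though (as noted after Theorem~\ref{thm:Hawsupport}) no uniform bound $H \ge \beta > n$ is available. Accordingly, I would assume for contradiction that $(M,g)$ is past timelike geodesically complete and derive a contradiction.

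First I would note that under this assumption $\sup\{\dg(p,S) : p \in I^-(S)\} = \infty$: any inextendible timelike geodesic, reparametrized by $g$-arclength so as to start on the Cauchy hypersurface $S$ and run into the past, satisfies $\dg(\sigma(s),S) \ge s$ for all $s$. I would then pick $p_k \in I^-(S)$ with $\dg(p_k,S) \to \infty$ and maximal timelike geodesics $\gamma_k$ from $S$ to $p_k$, parametrized by $g$-arclength from their endpoints on $S$. Since $S$ is compact these endpoints subconverge to some $q_0 \in S$, and by the limit curve theorem \cite{Min19} together with past completeness a subsequence of the $\gamma_k$ converges to a past-directed unit-speed timelike geodesic $\gamma\colon[0,\infty)\to M$ with $\gamma(0)=q_0$, each of whose segments $\gamma|_{[0,s]}$ realizes $\dg(\gamma(s),S)=s$; that is, $\gamma$ is a past $S$-ray.

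Next, for fixed $s>0$, I would argue exactly as in the proof of Proposition~\ref{prop:Hoftau} that, since $\gamma|_{[0,s]}$ is a maximal geodesic realizing $\dg(\gamma(s),S)$ and $\gamma$ (being a ray) stays maximal to $S$ on longer segments, $q_0$ is not a cut point of $\gamma(s)$, so the future Lorentzian sphere $\Sigma_s := \{x : \dg(\gamma(s),x) = s\}$ is a smooth spacelike hypersurface near $q_0$. It lies in $J^+(S)$ near $q_0$: if some $x\in\Sigma_s$ close to $q_0$ lay in $I^-(S)$, say $x \in I^-(y)$ with $y\in S$, the reverse triangle inequality would give $\dg(\gamma(s),y)\ge\dg(\gamma(s),x)+\dg(x,y)>s$, contradicting $\dg(\gamma(s),S)=s$. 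Thus $\Sigma_s$ is a future support hypersurface to $S$ at $q_0$, and the Raychaudhuri comparison used in Proposition~\ref{prop:Hoftau} (with $\kappa=-1$) bounds its mean curvature at $q_0=\gamma(0)$ by $H_{\Sigma_s}(q_0)\le n\coth(s)$.

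Finally, since $H > n$ at $q_0$ in the support sense, there is a $\delta>0$ such that, taking $\epsilon=\delta/2$, there is a past support hypersurface $\Sigma'$ to $S$ at $q_0$ with $H_{\Sigma'}(q_0)\ge n+\delta/2$. Choosing $s$ so large that $n\coth(s)<n+\delta/2$, I would then have two smooth spacelike hypersurfaces through $q_0$ — $\Sigma'$ locally to the past of $S$ and $\Sigma_s$ locally to its future — with $H_{\Sigma_s}(q_0) < H_{\Sigma'}(q_0)$, which is precisely the configuration excluded by the geometric maximum principle \cite[Thm.~1]{Esch89}, exactly as in the proof of Theorem~\ref{thm:Hawsupport}; the maximum principle forces $\Sigma_s$ and $\Sigma'$ to agree near $q_0$ with equal mean curvature there, contradicting the strict inequality. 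Hence $(M,g)$ must be past timelike geodesically incomplete. The step I expect to require the most care is the construction and regularity of the past $S$-ray for a merely $C^0$ Cauchy hypersurface — in particular verifying that the limit geodesic is timelike rather than null, that under the completeness assumption it is affinely defined on all of $[0,\infty)$, and that each of its segments genuinely maximizes the distance to $S$, so that the spheres $\Sigma_s$ are smooth near $q_0$; everything afterwards is a localization of arguments already carried out in the excerpt.
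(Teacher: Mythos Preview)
Your proof is correct and follows essentially the same approach as the paper: construct a complete past $S$-ray based at some $q_0\in S$ (using compactness and a limit-curve argument), use the future Lorentzian spheres along the ray as smooth future support hypersurfaces to $S$ at $q_0$ with mean curvature $\le n\coth(s)$, and contradict the pointwise support lower bound $H\ge n+\delta$ at $q_0$ via the maximum principle exactly as in Theorem~\ref{thm:Hawsupport}. You are in fact slightly more explicit than the paper --- spelling out the ray construction and the inclusion $\Sigma_s\subset J^+(S)$ near $q_0$ --- and you correctly flag the timelike-versus-null issue for the limit ray from a merely $C^0$ hypersurface, a point the paper passes over.
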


\begin{proof}
 It is well-known that a compact Cauchy hypersurface $S$ always admits a past $S$-ray, that is, a past-inextendible geodesic $\gamma$ from $S$ that realizes the distance to $S$ on its entire domain. Indeed, a constructive proof can be given by a limiting procedure, as in the proof of Theorem \ref{thm:taualmostcauchy}. We show that, under our curvature assumptions, every past $S$-ray is incomplete.

 Suppose that $\gamma : [0,\infty) \to M$ is a complete past $S$-ray, parametrized by Lorentzian arc-length. Then, by a similar argument as in the proof of Proposition~\ref{prop:Hoftau}, the Lorentzian sphere $\Sigma_{t} := \{ x \in\ M: d(\gamma(t),x) = t \}$ is a smooth future support hypersurface to $S$ at $q := \gamma(0)$, with mean curvature $H(t) \leq \beta_{-1,t} = n \coth (t)$. But at the same time, $H > n$ in the support sense, meaning that there exists $\delta > 0$ such that $H \geq n+\delta$ at $q$ in the support sense. Thus there is a past support hypersurface $\Sigma'$ at $q$ with mean curvature $H' \geq n + \delta - \epsilon$. Choosing $\epsilon = \delta/2$ and $t$ large enough, we obtain $H(t) < H'$. Therefore, we can apply the maximum principle to $\Sigma_t$ and $\Sigma'$ (as we did when proving Theorem~\ref{thm:Hawsupport}) to arrive at a contradiction. We conclude that every past $S$-ray in $(M,g)$ is incomplete, and since there exists at least one, the spacetime is past timelike geodesically incomplete.
\end{proof}

\bigskip

\noindent \textbf{Acknowledgements.} LGH thanks Miguel S\'{a}nchez for interesting discussions at the initial stages of this work. The authors are grateful to an anonymous reviewer for helpful feedback. The work of GJG was partially supported by the Simons Foundation, under Award No. 850541.

\bibliographystyle{abbrv}
\bibliography{mybib}

\end{document}